\DeclareMathAlphabet{\pazocal}{OMS}{zplm}{m}{n}
\newtheorem{definition}{Definition}
\newtheorem{observation}{Observation}
\newtheorem{theorem}{Theorem}
\newtheorem{corollary}{Corollary}
\newtheorem{lemma}{Lemma}
\title{Memoryless Algorithms for the Generalized $k$-server Problem on Uniform Metrics}
\date{}
\author{Dimitris Christou\thanks{National Technical University of Athens, Greece. \texttt{dimitrios.christou@hotmail.com, fotakis@cs.ntua.gr}.} 
\qquad
Dimitris Fotakis\footnotemark[1]
\qquad
Grigorios Koumoutsos\thanks{Universit\'{e} libre de Bruxelles, Belgium. \texttt{gregkoumoutsos@gmail.com}. Supported by Fonds de la Recherche Scientifique-FNRS Grant no MISU F 6001.}}
\newcommand{\MessageFromDimitris}[1]{\textcolor{red}{#1}}
\DeclareMathOperator{\ADV}{ADV}
\DeclareMathOperator{\ALG}{ALG}
\newcommand{\N}{\mathbb{N}}
\begin{document}
\maketitle
\justifying

\begin{abstract}
We consider the generalized $k$-server problem on uniform metrics. We study the power of memoryless algorithms and show tight bounds of $\Theta(k!)$ on their competitive ratio. In particular we show that the \textit{Harmonic Algorithm} achieves this competitive ratio and provide matching lower bounds. This improves the $\approx 2^{2^k}$ doubly-exponential bound of Chiplunkar and Vishwanathan for the more general setting of uniform metrics with different weights.   
\end{abstract}

\section{Introduction}

The $k$-server problem is one of the most fundamental and extensively studied problems in the theory of online algorithms. In this problem, we are given a metric space of $n$ points and $k$ mobile servers located at points of the metric space. At each step, a request arrives at a point of a metric space and must be served by moving a server there. The goal is to minimize the total distance travelled by the servers.

The $k$-server problem generalizes various online problems, most notably the paging (caching) problem, which corresponds to the $k$-server problem on uniform metric spaces. Paging, first studied in the seminal work of Sleator and Tarjan~\cite{ST85}, is well-understood: the competitive ratio is $k$ for deterministic algorithms and $H_k = \Theta(\log k)$ for randomized; those algorithms and matching lower bounds are folklore results for online algorithms~\cite{ST85,MS91,ACN00}.  

The $k$-server problem in general metric spaces is much deeper and intriguing. In a landmark result, Koutsoupias and Papadimitriou~\cite{KP95} showed that the \textit{Work Function Algorithm} (WFA)~\cite{KP95} is $(2k-1)$-competitive, which is almost optimal for deterministic algorithms since the competitive ratio is at least $k$~\cite{MMS90}. For randomized algorithms, it is believed that an $O(\log k)$-competitive algorithm is possible; despite several breakthrough results over the last decade~\cite{BBMN15,BCLLM18,BGMN19,Lee18}, this conjecture still remains open. 

\paragraph{Memoryless Algorithms:} One drawback of the online algorithms achieving the best-known competitive ratios for the $k$-server problem is that they are computationally inefficient. For example, the space used by the WFA is proportional to the number of different configurations of the servers, i.e., $\binom{n}{k}$, which makes the whole approach quite impractical.

This motivates the study of trade-offs between the competitive ratio and computational efficiency. A starting point in this line of research, is to determine the competitive ratio of \textit{memoryless algorithms:} a memoryless algorithm, decides the next move based solely on the current configuration of the servers and the given request. 

Memoryless algorithms for the $k$-server problem have been extensively studied (see e.g.,~\cite{BEY98,Kou09} for detailed surveys). The most natural memoryless algorithm is the \textit{Harmonic Algorithm}, which moves each server with probability inversely proportional to its distance from the requested point. It is known that its competitive ratio is $O(2^k \cdot \log k) $ and $\Omega(k^2)$~\cite{BG00}. It is conjectured that in fact the Harmonic Algorithm is $\frac{k (k+1)}{2} = O(k^2)$-competitive; this remains a long-standing open problem. For special cases such as uniform metrics and resistive metric spaces, an improved competitive ratio of $k$ can be achieved and this is the best possible for memoryless algorithms~\cite{CDRS93}. 
% (including the real line)

We note that the study of memoryless algorithms for the $k$-server problem is of interest only for randomized algorithms; it is easy to see that any deterministic memoryless algorithm is not competitive. Throughout this paper, we adopt the standard benchmark to evaluate randomized memoryless algorithms, which is comparing them against an \textit{adaptive online adversary}, unless stated otherwise. For a detailed discussion on the different adversary models and relations between them, see~\cite{BEY98,BBKTW94}.

\paragraph{The generalized $k$-server problem.} In this work, we focus on the generalized $k$-server problem, a far-reaching extension of the $k$-server problem, introduced by Koutsoupias and Taylor~\cite{KT04}. Here, each server $s_i$ lies in a different metric space $M_i$ and a request is a tuple $(r_1,\dotsc,r_k)$, where  $r_i \in M_i$; to serve it, some server $s_i$ should move to point $r_i$. The standard $k$-server problem is the very special case where all metric spaces are identical, i.e., $M_i = M$ and all requests are of the form $(r,r,\dotsc,r)$. Other well-studied special cases of the generalized $k$-server problem are (i) the weighted $k$-server problem~\cite{FR94,BEK17}, where all metrics are scaled copies of a fixed metric $M$, i.e., $M_i = w_i M$ and all requests of the form $r = (r,r,\dotsc,r)$ and (ii) the CNN problem~\cite{KT04,iw01}, where all metrics $M_i$ are real lines.

\begin{comment}
Other well-studied special cases are (i) the weighted $k$-server problem: this is the weighted variant of the classic $k$-server problem where each server has different weight and the cost of moving server $i$ by distance $d$ is $w_i \cdot d$. This is the special case where each all metrics are scaled copies of a fixed metric $M$, i.e., $M_i = w_i M$ and the CNN problem which corresponds to the special case where all metric spaces are lines. 
\end{comment}

\paragraph{Previous Work.} The generalized $k$-server problem has a much richer structure than the classic $k$-server problem and is much less understood. For general metric spaces, no $f(k)$- competitive algorithms are known, except from the special case $k=2$~\cite{SSP03,SS06,Sit14}. For $k \geq 3$, competitive algorithms are known only for the following special cases:

\begin{enumerate}
    \item Uniform Metrics: All metric spaces $M_1,\dotsc, M_k$ are uniform (possibly with different number of points), with the same pairwise distance, say 1.
    \item Weighted Uniform Metrics: All metrics are uniform, but they have different weights; the cost of moving in metric $M_i$ is $w_i$. 
\end{enumerate}

Perhaps surprisingly, those two cases are qualitatively very different. For deterministic algorithms Bansal et. al.~\cite{BEKN18} obtained algorithms with (almost) optimal competitive ratio. For uniform metrics their algorithm is $(k \cdot 2^k)$-competitive, while the best possible ratio is at least $2^{k} -1 $~\cite{KT04}. For weighted uniform metrics, they obtained a $2^{2^{k+3}}$-competitive algorithm (by extending an algorithm of Fiat and Ricklin~\cite{FR94} for weighted $k$-server on uniform metrics), while the lower bound for the problem is $2^{2^{k-4}}$~\cite{BEK17}. 

\medskip

\noindent \textit{Memoryless Algorithms for Generalized $k$-server:} Recently Chiplunkar and Vishnawathan~\cite{ChV20} studied randomized memoryless algorithms in weighted uniform metrics. They showed tight doubly exponential ($\approx 1.6^{2^{k}}$) bounds on the competitive ratio. Interestingly, the memoryless algorithm achieving the optimal bound in this case is different from the Harmonic Algorithm.

Since the weighted uniform case seems to be much harder than the uniform case, it is natural to expect that a better bound can be achieved by memoryless algorithms in uniform metrics. Moreover, in weighted uniform metrics the competitive ratios of  deterministic algorithms (with memory) and randomized memoryless algorithms are essentially the same. Recall that a similar phenomenon occurs for the paging problem (standard $k$-server on uniform metrics) where both deterministic and randomized memoryless algorithms have a competitive ratio of $k$. Thus, it is natural to guess that for uniform metrics, a competitive ratio of order $2^k$ (i.e., same as the deterministic competitive ratio) can be achieved by memoryless algorithms.

\subsection{Our Results}

In this work we study the power of memoryless algorithms for the generalized $k$-server problem in uniform metrics and we determine the exact competitive ratio by obtaining tight bounds.

First, we determine the competitive ratio of the Harmonic Algorithm on uniform metrics.

\begin{theorem}
\label{thm:upper_bound}
The Harmonic Algorithm for the generalized $k$-server problem on uniform metrics is $(k \cdot \alpha_k)$-competitive, where $\alpha_k$ is the solution of the recursion $\alpha_k = 1 + (k-1) \alpha_{k-1}$, with $\alpha_1 = 1$. 
\end{theorem}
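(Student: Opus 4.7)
The plan is to run a pointwise potential-function analysis in which the potential depends only on the \emph{number} of coordinates on which the Harmonic algorithm's random state disagrees with the adversary's configuration. The first observation is that on uniform metrics, all nonzero inter-point distances equal $1$, so Harmonic reduces to the following rule: if some ALG server already sits at the requested coordinate, do nothing; otherwise pick one of the $k$ servers uniformly at random and move it to its requested coordinate.

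Let $d(s^A, D) := |\{i : s^A_i \neq D_i\}|$ be the Hamming distance between ALG's current state $s^A$ and the adversary's configuration $D$. Pick a function $\phi:\{0,1,\ldots,k\}\to\mathbb{R}_{\geq 0}$ with $\phi(0)=0$, write $\Delta_d := \phi(d+1)-\phi(d)$, and set $\Phi(s^A, D) := \phi(d(s^A, D))$. I want two amortized inequalities to hold pointwise in $s^A$: (i) after a single adversary server-move, $\Phi$ grows by at most $k\alpha_k$, and (ii) after ALG serves a request, $\mathbb{E}[\text{ALG cost}] + \mathbb{E}[\Delta\Phi] \leq 0$, where the expectation is over Harmonic's random choice. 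Inequality (i) is easy: a single ADV server-move changes $d$ by at most one (a short case analysis on whether the new or old ADV point equals $s^A_i$), so it reduces to the single constraint $\max_d \Delta_d \leq k\alpha_k$.

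The interesting case of (ii) is when $s^A$ has no coordinate matching the request, forcing ALG to move. Any such request satisfies $r_i \neq s^A_i$ for all $i$ and $r_j = D_j$ for at least one $j$ (so the adversary can serve it without moving). Parameterizing by $a := |\{j : r_j = D_j\}| \geq 1$, Harmonic's uniform choice sends $d$ to $d-1$, $d$, or $d+1$ with probabilities $a/k$, $(d-a)/k$, and $(k-d)/k$ respectively. The inequality $1 + \mathbb{E}[\phi(d^{\text{new}})] \leq \phi(d)$ then simplifies to $a\,\Delta_{d-1} \geq k + (k-d)\Delta_d$, which the adversary minimizes by taking $a=1$, yielding $\Delta_{d-1} \geq k + (k-d)\Delta_d$. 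Coupling this with the tight choice $\Delta_0 = k\alpha_k$ forced by (i), the recurrence solves to $\Delta_{k-j} = k\alpha_j$ with $\alpha_j$ exactly the sequence in the statement; the telescoping sum of the amortized inequalities then yields the claimed competitive ratio up to the additive constant $\phi(d_{\text{initial}})$.

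The main obstacle is verifying that this choice of $\phi$ is genuinely consistent across both inequalities and all adversary strategies. Concretely, I must check (a) that the pointwise ALG inequality holds for every adversary-chosen $a \geq 1$ (it does, because $\Delta_{d-1} \geq 0$ so $a=1$ is worst-case), (b) that $\{\Delta_d\}$ is non-increasing in $d$ so that $\max_d \Delta_d = \Delta_0$ (follows by induction from $\Delta_{d-1} - \Delta_d = k + (k-d-1)\Delta_d \geq 0$), and (c) that the adversary cannot exploit correlations by moving multiple servers in one step (handled by summing per-move contributions). The key conceptual point is that uniform metrics admit a potential depending only on the \emph{count} of disagreements rather than on their full pattern; this coordinate symmetry is exactly what breaks in the weighted case, and it is what separates our $\Theta(k!)$ bound from the doubly-exponential Chiplunkar--Vishwanathan bound.
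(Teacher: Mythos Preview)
Your proposal is correct and follows essentially the same approach as the paper: a potential depending only on the Hamming distance $d$ between the algorithm's and adversary's configurations, with increments $\Delta_d = k\alpha_{k-d}$, analyzed via the same two amortized inequalities (adversary move raises $\Phi$ by at most $k\alpha_k$; algorithm move drops $\mathbb{E}[\Phi]$ by at least~$1$). The only cosmetic difference is that the paper \emph{motivates} the potential $h(\ell)=k\sum_{i=k-\ell+1}^{k}\alpha_i$ as the expected extinction time of an explicit birth--death chain (the ``Harmonic Chain'') and then verifies the amortized inequalities, whereas you reverse-engineer the same $\Delta_d$ directly from the constraint $\Delta_{d-1}\geq k+(k-d)\Delta_d$; the resulting function and the case analysis on the parameter $a=|\{j:r_j=D_j\}|$ (the paper's~$C$) are identical.
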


It is not hard to see that $\alpha_k = \Theta((k-1)!)$, therefore the competitive ratio of the Harmonic Algorithm is $O(k!)$. This shows that indeed, uniform metrics allow for substantial improvement on the performance compared to weighted uniform metrics where there is a doubly-exponential lower bound.

To obtain this result, we analyse the Harmonic Algorithm using Markov Chains and random walks, based on the \textit{Hamming distance} between the configuration of the algorithm and the adversary, i.e., the number of metric spaces where they have their servers in different points. Based on this, we then provide a proof using a potential function, which essentially captures the expected cost of the algorithm until it reaches the same configuration as the adversary. The proof is in Section~\ref{section:harmonic_upper_bound}.

Next we show that the upper bound of Theorem~\ref{thm:upper_bound} is tight, by providing a matching lower bound.

\begin{theorem}
\label{thm:gen_lower_bound}
The competitive ratio of any randomized memoryless algorithm for the generalized $k$-server problem on uniform metrics is at least $k \cdot \alpha_k$. 
\end{theorem}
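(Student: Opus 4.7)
The plan is to prove the lower bound by induction on $k$. The base case $k=1$ is immediate, since any algorithm is trivially $1 \cdot \alpha_1 = 1$ competitive. For the inductive step, given any randomized memoryless algorithm $A$, the goal is to construct an adaptive online adversary strategy forcing $\mathbb{E}[\ALG]/\mathbb{E}[\ADV] \geq k\alpha_k$. It suffices to work with the instance in which each metric $M_i$ has exactly two points, which we denote $\{0,1\}$; by averaging $A$ over the point-permuting symmetries within each $M_i$, we may further assume the algorithm's decision depends only on the mismatch pattern between its current configuration $a$ and the request $r$.

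The adversary maintains a target configuration $c \in \{0,1\}^k$ and plays in two interleaved regimes. In the \emph{free-ride} regime ($a \neq c$), the adversary issues the coordinate-wise complement $r = \bar{a}$: the algorithm is forced to move (paying $1$), while the adversary matches $c_i$ for some $i$ with $a_i \neq c_i$ (paying $0$). In the \emph{reset} regime ($a = c$), the adversary issues a request forcing both parties to move and then strategically chooses its own server so as to leave $A$ in a worst-case state with respect to a designated privileged coordinate $i$. The resulting joint evolution of $(a,c)$ is a Markov chain controlled by the adversary, which can be analyzed via the Hamming distance together with the identity of the privileged coordinate.

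The recursion $\alpha_k = 1 + (k-1)\alpha_{k-1}$ arises as follows. Within a sub-phase in which coordinate $i$ is privileged, the adversary issues requests whose $i$-th component always equals $c_i$; as long as the invariant $a_i = c_i$ holds, the induced interaction on the remaining $k-1$ coordinates simulates an instance of the generalized $(k-1)$-server problem, so the inductive hypothesis contributes the $(k-1)\alpha_{k-1}$ term. The reset step that initializes each sub-phase is responsible for the ``$+1$''. By cycling the privileged coordinate $i$ through all $k$ values before allowing the adversary's next global move, the algorithm must repeat this $\alpha_k$-cost cycle $k$ times while the adversary pays only once, yielding the overall ratio $k\alpha_k$.

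The main obstacle will be making the reduction to the $(k-1)$-problem rigorous. In particular, one must verify that the conditional behavior of $A$ on the $k-1$ active coordinates (given that the privileged coordinate remains matched) corresponds to a valid memoryless algorithm for the smaller problem, so that the inductive hypothesis actually applies; this is where the symmetrization step is crucial. One must also carefully handle \emph{repair} steps, in which $A$'s random choices accidentally flip the privileged coordinate $i$ and the adversary must respond with a corrective request to restore the invariant $a_i = c_i$; an amortized analysis, dual to the potential-function argument used in the proof of Theorem~\ref{thm:upper_bound}, will be needed to show that these repairs do not erode the bound.
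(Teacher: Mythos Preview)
Your proposal has a fatal gap: you restrict to metrics with exactly two points, and on such instances the bound $k\cdot\alpha_k=\Theta(k!)$ is simply false. The paper itself proves (Section~\ref{sec:ext} and Appendix~\ref{app:harmonic_n=2}) that when every $M_i$ has $n=2$ points the Harmonic Algorithm is $O(2^k)$-competitive, so no adaptive adversary working in $\{0,1\}^k$ can extract a ratio of order $k!$ from it, let alone from every memoryless algorithm. The reason is exactly what happens in your ``free-ride'' regime: when you request $r=\bar a$, \emph{every} coordinate on which the algorithm currently disagrees with the target $c$ satisfies $r_j=c_j$, so with Hamming distance $\ell$ the distance drops with probability $\ell/k$ (not $1/k$). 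This collapses the random walk to one whose hitting time is $\Theta(2^k)$, not $\Theta(k!)$, and your inductive accounting of $\alpha_k=1+(k-1)\alpha_{k-1}$ cannot be recovered from it. Your sub-phase/privileged-coordinate reduction inherits the same defect: with two points you cannot simultaneously force the algorithm to move, keep the adversary stationary, and reveal only \emph{one} correct coordinate.

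The paper avoids this by working in metrics with $n\ge 3$ points, so that the request can be chosen with $r_m=\pazocal{A}_m$ at a single index $m$ and $r_j\notin\{\pazocal{A}_j,q_j\}$ elsewhere. The analysis then tracks the full subset $S\subseteq[k]$ of disagreeing coordinates (not just $|S|$), writes down a $2^k$-variable linear system for the expected hitting times $h(S)$, imports a monotonicity lemma of Chiplunkar--Vishwanathan (Lemma~\ref{lemma:monotonicity}) to show $p_i(h(S)-h(S\setminus\{i\}))\ge \alpha_{k-|S|+1}$ by downward induction on $|S|$, and concludes $h(\{k\})\ge \alpha_k/p_k\ge k\alpha_k$. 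If you want to salvage an inductive proof in the spirit of your proposal, you must first move to $n\ge 3$, and then the ``repair'' steps you flag become the whole difficulty: once the privileged coordinate can be corrupted, the conditioned process on the remaining $k-1$ coordinates is no longer a clean memoryless $(k{-}1)$-server instance, and you would essentially be re-deriving the subset-indexed system anyway.
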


Here the analysis differs, since the Hamming distance is not the right metric to capture the ``distance'' between the algorithm and the adversary: assume that all their servers are at the same points, except one, say server $s_i$. Then, in the next request, the algorithm will reach the configuration of the adversary with probability $p_i$; clearly, if $p_i$ is large, the algorithm is in a favourable position, compared to the case where $p_i$ is small.

This suggests that the structure of the algorithm is not solely characterized by the number of different servers (i.e., Hamming distance) between the algorithm and the adversary, but also the labels of the servers matter. For that reason, we need to focus on the subset of different servers, which gives a Markov Chain on $2^k$ states. Unfortunately, analyzing such chains in a direct way can be done only for easy cases like $k=2$ or $k=3$. For general values of $k$, we find an indirect way to characterize the solution of this Markov Chain. A similar approach was taken by Chiplunkar and Vishwanathan~\cite{ChV20} for weighted uniform metrics; we use some of the properties they showed, but our analysis differs since we need to make use of the special structure of our problem to obtain our bounds. 

In fact, we are able to show that any memoryless algorithm other than the Harmonic has competitive ratio strictly larger than $k \cdot \alpha_k$. We describe the details in Section~\ref{sec:gen_lb}.

On the positive side, our results show that improved guarantees can be achieved compared to the weighted uniform case. On the other hand, the competitive ratio of memoryless algorithms ($\Theta(k!)$) is asymptotically worse than the deterministic competitive ratio of $2^{O(k)}$. This is somewhat surprising, since (as discussed above) in most uniform metric settings of $k$-server and generalizations, the competitive ratio of deterministic algorithms (with memory) and randomized memoryless is (almost) the same.

\subsection{Notation and Preliminaries}

\paragraph{Memoryless Algorithms.} A memoryless algorithm for the generalized $k$-server problem receives a request $r = (r_1,\dotsc,r_k)$ and decides which server to move based only on its current configuration $q = (q_1, \dotsc, q_k)$ and $r$. For the case of uniform metrics, a memoryless algorithm is fully characterized by a probability distribution $p = (p_1,\dotsc,p_k)$; whenever it needs to move a server, it uses server $s_i$ of metric $M_i$ with probability $p_i$. Throughout the paper we assume for convenience (possibly by relabeling the metrics) that given a memoryless algorithm we have that $p_1 \geq p_2 \geq \dotsc \geq p_k$. We also assume that $p_i>0$ for all $i$; otherwise it is trivial to show that the algorithm is not competitive.

\paragraph{The Harmonic Algorithm.} In the context of generalized $k$-server on uniform metrics, the Harmonic Algorithm is a memoryless algorithm which moves at all metric spaces with equal probability, i.e., $p_i = 1/k$, for all $i \in [k]$.

\paragraph{The harmonic recursion.} We now define the recursion that will be used to get the competitive ratio of the harmonic algorithm, which we call the \textit{harmonic recursion} and do some basic observations that will be useful throughout the paper. 

%In this section, we will define the \textit{harmonic recursion} and prove some useful properties. As we will see, this recursion plays an important role on the competitive ratio of any memoryless algorithm for the generalized k-server problem on uniform metrics and thus, it is useful to define and study it in this part of the report.

\begin{definition}[Harmonic recursion]
The harmonic recursion $\alpha_\ell\in\mathbb{N}$ satisfies the recurrence $\alpha_\ell = 1 + (\ell-1)\alpha_{\ell-1}$ for $\ell>1$, and $\alpha_1=1$. 
\end{definition}

%By this definition, we have $\alpha_1=1, \alpha_2=2, \alpha_3 = 5$ etc. 

Based on the definition, we make the following observation:

\begin{observation}\label{ob:harmonic_recursion_increasing}
The harmonic recursion is \textit{strictly} increasing, i.e., $\alpha_{\ell+1}>\alpha_\ell$ for any $\ell\in\N$.
\end{observation}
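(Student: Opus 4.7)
The plan is to prove this by a straightforward induction, with a small auxiliary claim as the key ingredient. Since the recursion $\alpha_\ell = 1 + (\ell-1)\alpha_{\ell-1}$ involves multiplying the previous value by $\ell-1$, the natural obstruction to monotonicity would only arise if $\alpha_{\ell-1}$ could be very small (in particular, much less than $1$). So the first step I would carry out is to establish the auxiliary fact that $\alpha_\ell \geq 1$ for every $\ell \in \mathbb{N}$. This is immediate by induction: $\alpha_1 = 1$ by definition, and if $\alpha_{\ell} \geq 1$, then $\alpha_{\ell+1} = 1 + \ell\,\alpha_\ell \geq 1 + \ell \geq 2 \geq 1$.

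Given the auxiliary claim, strict monotonicity follows in one line. For any $\ell \in \mathbb{N}$, the recurrence yields
\[
\alpha_{\ell+1} \;=\; 1 + \ell\,\alpha_\ell \;\geq\; 1 + \alpha_\ell \;>\; \alpha_\ell,
\]
where the first inequality uses $\ell \geq 1$ together with $\alpha_\ell \geq 1$ (which guarantees $\ell\,\alpha_\ell \geq \alpha_\ell$), and the strict inequality is just $1 > 0$. This completes the argument. There is essentially no obstacle here; the only subtlety is remembering to justify that $\alpha_\ell$ is bounded below by $1$ (equivalently, positive and at least the multiplicative identity) before invoking the step $\ell\,\alpha_\ell \geq \alpha_\ell$, which is why I would separate out the auxiliary claim rather than attempt to prove monotonicity directly by induction on $\alpha_{\ell+1} - \alpha_\ell$.
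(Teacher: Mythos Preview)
Your proof is correct. The paper does not give an explicit argument for this observation, treating it as immediate from the recurrence; your write-up simply spells out the one-line computation $\alpha_{\ell+1} = 1 + \ell\alpha_\ell > \alpha_\ell$, supported by the easy fact that all $\alpha_\ell$ are at least $1$ (indeed, nonnegativity of $\alpha_\ell$ would already suffice for $\ell\alpha_\ell \geq \alpha_\ell$ when $\ell\geq 1$).
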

%\begin{proof}
%It is trivial to show this by the definition.
%\end{proof}

Also it is easy to show that $\alpha_\ell$ has a closed form, given by 

\begin{equation}\label{eq:harmonic_recursion_closed_form}
\alpha_\ell= (\ell-1)!\sum_{i=0}^{\ell-1}\frac{1}{i!}.
\end{equation}
\noindent Based on this closed form, we get the following:

\begin{observation}\label{ob:harmonic_recursion_magnitude}
For any $\ell\in\N$, it holds that $(\ell-1)! \leq \alpha_\ell \leq e(\ell-1)! $.
\end{observation}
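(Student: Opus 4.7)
The plan is to prove both bounds by using the closed-form expression $\alpha_\ell = (\ell-1)!\sum_{i=0}^{\ell-1}\frac{1}{i!}$ given in equation~\eqref{eq:harmonic_recursion_closed_form}, together with the Taylor series $e = \sum_{i=0}^{\infty} \frac{1}{i!}$. Both inequalities reduce to estimating the finite sum $S_\ell := \sum_{i=0}^{\ell-1}\frac{1}{i!}$.

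For the lower bound, I would note that the $i=0$ term in $S_\ell$ equals $1$, and every other term is nonnegative, so $S_\ell \geq 1$ for all $\ell \geq 1$. Multiplying by $(\ell-1)!$ yields $\alpha_\ell \geq (\ell-1)!$. For the upper bound, since $S_\ell$ is a partial sum of the series for $e$ with all positive terms, we have $S_\ell \leq \sum_{i=0}^{\infty}\frac{1}{i!} = e$. Multiplying by $(\ell-1)!$ gives $\alpha_\ell \leq e(\ell-1)!$.

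There is no real obstacle here; the only subtlety is verifying the closed form \eqref{eq:harmonic_recursion_closed_form} if one wishes to be self-contained, which follows by a straightforward induction on $\ell$: the base case $\alpha_1 = 0!\cdot\frac{1}{0!} = 1$ holds, and assuming the formula for $\alpha_{\ell-1}$,
\[
\alpha_\ell = 1 + (\ell-1)\alpha_{\ell-1} = 1 + (\ell-1)!\sum_{i=0}^{\ell-2}\frac{1}{i!} = (\ell-1)!\left(\frac{1}{(\ell-1)!} + \sum_{i=0}^{\ell-2}\frac{1}{i!}\right) = (\ell-1)!\sum_{i=0}^{\ell-1}\frac{1}{i!},
\]
which completes the induction and hence the proof of the observation.
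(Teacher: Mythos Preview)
Your proof is correct and follows essentially the same approach as the paper: both use the closed form~\eqref{eq:harmonic_recursion_closed_form} and bound the partial sum $\sum_{i=0}^{\ell-1}\tfrac{1}{i!}$ between $1$ (its first term) and $e$ (the full Taylor series). Your added inductive verification of~\eqref{eq:harmonic_recursion_closed_form} is a harmless bonus that the paper omits.
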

\begin{comment}
\begin{proof}
Fix any $\ell\in\N$. Then, 

\begin{equation*}
\sum_{i=0}^{0}\frac{1}{i!} \leq \sum_{i=0}^{\ell-1}\frac{1}{i!} \leq \sum_{i=0}^{\infty}\frac{1}{i!} \Rightarrow 
1 \leq \sum_{i=0}^{\ell-1}\frac{1}{i!} \leq e
\end{equation*}
and then the property follows trivially.
\end{proof}
\end{comment}
\noindent This observation also shows that for any $\ell\in\N$, we have $\alpha_\ell = \Theta((\ell-1)!)$.

\begin{comment}
\begin{enumerate}
    \item Define formally the problem and Hamming distance.
    \item Metric spaces $M_1,M_2, \dotsc, M_k$. Assume they all have $n$ points.
    \item A randomized memoryless algorithm is a probability distribution over metric spaces. Make a comment on $p_i>0 \;\forall i$, otherwise trivially not competitive.
\end{enumerate}
\end{comment}

%\subsection{Organization} 

%In Section~\ref{section:harmonic_upper_bound} we present the upper bound on the competitive ratio of the Harmonic Algorithm. In Section~\ref{sec:gen_lb} we describe the general lower bound for memoryless algorithms.% In Section~\ref{sec:ext} we discuss some extensions of our techniques to the oblivious adversary model and metrics of $n=2$ points. 

\section{Upper Bound}\label{section:harmonic_upper_bound}

In this section we prove Theorem~\ref{thm:upper_bound}. More precisely, we use a potential function argument to show that for any request sequence, the expected cost of the Harmonic Algorithm is at most $k\cdot\alpha_k$ times the cost of the adversary.

\paragraph{Organization.} In Section~\ref{subsection:selecting_potential}, we define a potential between the Harmonic Algorithm's and the adversary's configurations that is inspired by random walks on a special type of Markov Chains~\cite{prob:2003} we refer to as the ``\textit{Harmonic Chain}''. The required background of Markov Chains is presented in Appendix~\ref{app:markov_chains}. Then, in Section~\ref{subsection:potential_analysis} we will use this potential to prove the upper bound of Theorem~\ref{thm:upper_bound} with a standard potential-based analysis.

% The first step towards our study of memoryless algorithms for the generalized k-server problem on uniform metrics is to prove an upper bound for the competitive ratio of the \textit{harmonic} algorithm, since it is the only memoryless algorithm for the problem with reasonable probabilities. 

% As we will see in the next sections, the upper bound of $k\alpha_k$ is tight in the adaptive setting when $n\geq 3$ and tight in the oblivious setting when $n\geq 2k+1$. For the special case of metrics with $n=2$ points, this upper bound can be improved from $\Theta(k!)$ to $\Theta(2^k)$. We will treat this special case in the Appendix. \MessageFromDimitris{Rewrite this after the paper has been organized.}

\subsection{Definition of the Potential Function}\label{subsection:selecting_potential}

We begin by presenting the intuition behind the definition of our potential function. Our first observation is that since (i) the metrics are uniform with equal weights and (ii) the Harmonic Algorithm does not distinguish between metrics since it has equal probabilities $\frac{1}{k}$, it makes sense for the potential between two configurations $p,q\in [n]^k$ to depend only on their Hamming distance and not on the labels of their points. In order to come up with an appropriate potential, we need to understand how the Hamming distance between the Harmonic Algorithm's and the adversary's configurations evolves over time. 

Imagine that the adversary moves to an ``optimal'' configuration of his choice and then it serves requests until the Harmonic Algorithm reaches this configuration as well. Since the adversary must serve all the requests using a server from its configuration, we know that for each request $r = (r_1,\dotsc,r_k)$, at least one of the requested points $r_i$ should coincide with the $i$-th server of the adversary.  In that case, with probability $\frac{1}{k}$ the Harmonic Algorithm moves in metric $M_i$, thus it decreases his Hamming distance from the adversary by 1. On the other hand, assume that $\ell$ servers of the algorithm coincide with the ones of the adversary. Then, with probability $\frac{\ell}{k}$ it would increase its Hamming distance from the optimal configuration by 1. This shows that the evolution of the Hamming distance between the Harmonic Algorithm's and the adversary's configurations is captured by a random walk on the following Markov Chain that we refer to as the \textit{Harmonic Chain}.
\begin{figure}[ht]
    \centering
    \includegraphics[width=\textwidth]{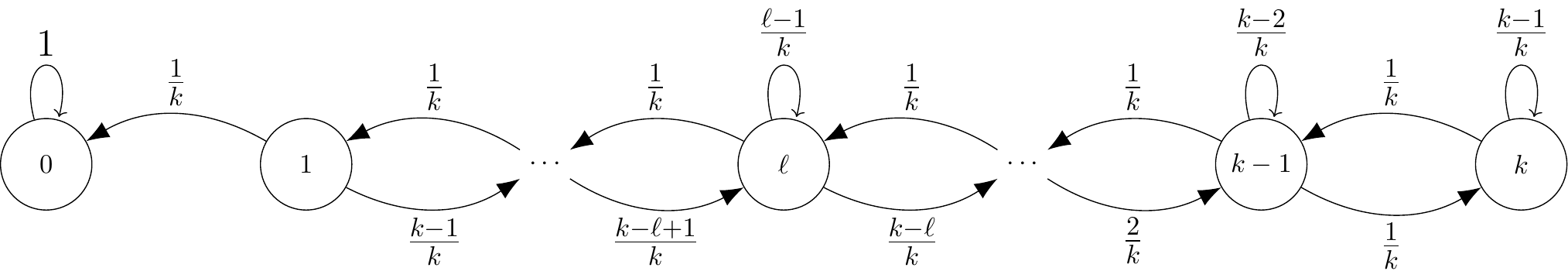} 
    \caption{The Harmonic Chain - Here, the states of the chain denote the Hamming distance between the configurations of the Harmonic Algorithm and the adversary.}
\end{figure}
\FloatBarrier

While not necessary for the definition of the potential, a formal definition of the Harmonic Chain is included in Appendix~\ref{app:markov_chains}. In the scenario we described above, the expected movement cost of the Harmonic Algorithm until it reaches the adversary's configuration with an initial Hamming distance of $\ell$ would be $\mathbb{E}[N|X_0=\ell]$ where $N$ denotes a random variable defined as $N=min_{\tau \geq 0}\{X_\tau = 0\}$ and $X_t$ denotes the state of the Harmonic Chain at time $t$. In the literature, this quantity is known as the \textit{Expected Extinction Time} (ETT) ~\cite{prob:2003} of a Markov Chain and we use $h(\ell)$ to denote it. Intuitively, $h(k)$ should immediately give an upper bound on the competitive ratio of the Harmonic Algorithm.

We study the Harmonic Chain and prove the following Theorem:

\begin{theorem}\label{thm:harmonic_EET}
For any initial state $\ell\in\{0,1,\dotsc,k\}$, the EET of the Harmonic Chain is given by 
$$h(\ell) = k\sum_{i=k-\ell+1}^{k}\alpha_{i}$$
\end{theorem}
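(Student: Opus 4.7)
The plan is to run a first-step analysis on the Harmonic Chain, collapse the resulting three-term identities into a one-step recursion on the consecutive differences $\Delta_\ell := h(\ell)-h(\ell-1)$, and identify that recursion with the harmonic recursion after a reindexing.

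The chain described above has, at state $d\in\{1,\dotsc,k\}$, transitions
\[
P(d,d-1)=\frac{1}{k},\qquad P(d,d)=\frac{d-1}{k},\qquad P(d,d+1)=\frac{k-d}{k},
\]
with state $0$ absorbing and the ``$+1$'' term absent when $d=k$; the $(d-1)/k$ self-loop accounts for the case in which the Harmonic Algorithm picks one of the other ``non-good'' disagreeing metrics (where the adversary placed the request at a third point). Conditioning on the first step gives $h(0)=0$ and, for $1\leq \ell\leq k$,
\[
h(\ell)=1+\tfrac{1}{k}\,h(\ell-1)+\tfrac{\ell-1}{k}\,h(\ell)+\tfrac{k-\ell}{k}\,h(\ell+1),
\]
with the $h(\ell+1)$ term dropping at $\ell=k$. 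Multiplying by $k$ and moving the self-loop term to the left rearranges this into
\[
(k-\ell+1)\,h(\ell)=k+h(\ell-1)+(k-\ell)\,h(\ell+1).
\]

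Next I would rewrite this last identity as $(k-\ell)\bigl(h(\ell)-h(\ell+1)\bigr)+\bigl(h(\ell)-h(\ell-1)\bigr)=k$, which collapses into the one-term recursion
\[
\Delta_\ell = k + (k-\ell)\,\Delta_{\ell+1},\qquad 1\leq\ell\leq k-1,\qquad \Delta_k=k,
\]
the boundary coming from the $\ell=k$ case of the three-term identity. Setting $a_\ell := \Delta_\ell/k$ yields $a_\ell = 1+(k-\ell)\,a_{\ell+1}$ with $a_k=1$, and the substitution $b_j := a_{k-j+1}$ turns this into $b_j = 1+(j-1)\,b_{j-1}$ with $b_1=1$, which is exactly the harmonic recursion. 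Hence $a_\ell = \alpha_{k-\ell+1}$, so $\Delta_\ell = k\,\alpha_{k-\ell+1}$. Telescoping from $h(0)=0$ gives
\[
h(\ell)=\sum_{d=1}^{\ell}\Delta_d = k\sum_{d=1}^{\ell}\alpha_{k-d+1} = k\sum_{i=k-\ell+1}^{k}\alpha_i,
\]
matching the claimed formula.

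The only mildly delicate step is collapsing the three-term identity into a one-step recursion on $\Delta_\ell$ (equivalently, spotting the right linear combination of the equations at consecutive $\ell$); once that is in place, identifying the harmonic recursion under the reindexing $j=k-\ell+1$ and telescoping back to $h(\ell)$ are purely routine.
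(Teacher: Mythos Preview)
Your proof is correct. Both you and the paper start from the same first-step identity $h(\ell)=1+\tfrac{1}{k}h(\ell-1)+\tfrac{\ell-1}{k}h(\ell)+\tfrac{k-\ell}{k}h(\ell+1)$ and pass to the consecutive differences, but the executions diverge from there. The paper routes the computation through the general Birth--Death chain formula (Theorem~\ref{th:eet_of_bdchain}): it needs a second initial condition, obtains $h(1)$ from the stationary distribution via the expected return time, plugs in the Harmonic Chain probabilities to get an explicit sum, and then recognizes the closed form $\alpha_\ell=(\ell-1)!\sum_{i=0}^{\ell-1}1/i!$ inside it. Your argument is more self-contained: you read off the boundary $\Delta_k=k$ directly from the $\ell=k$ equation, and the reindexing $j=k-\ell+1$ identifies the difference recursion with the defining recurrence of $\alpha_j$, so the closed form of $\alpha_\ell$ and the stationary-distribution detour are never needed. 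The paper's route has the advantage of yielding a reusable formula for arbitrary Birth--Death chains (which they exploit again in Appendix~\ref{app:harmonic_n=2} for the $n=2$ case); your route is shorter and keeps the harmonic recursion visible throughout.
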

\begin{proof}
By using conditional probabilities on the EET of the Harmonic Chain, we get that for any $\ell\in\{1,2,\dotsc ,k\}$, $$h(\ell)=1+\frac{1}{k}h(\ell-1) + \frac{k-\ell}{k}h(\ell+1) + (1-\frac{1}{k}-\frac{k-\ell}{k})h(\ell).$$ 
This yields a second-order recurrence relation we need to solve for $h(\ell)$. A formal proof is given in Appendix~\ref{app:markov_chains}, where we derive the Theorem from the EET of the more general class of Markov Chains called Birth-Death Chains.
\end{proof}

From Theorem~\ref{thm:harmonic_EET} and Observations~\ref{ob:harmonic_recursion_increasing},~\ref{ob:harmonic_recursion_magnitude} we immediately get that $h(\ell)$ is (strictly) increasing and $h(\ell)=\Theta(\ell!) \;\forall \ell\in\{1,\dotsc ,k\}$. Furthermore, we make the following observation: 

\begin{observation}\label{cor:h(n)/n}
For any $\ell\in \{1,2,\dotsc,k\}$, $\frac{h(\ell)}{\ell} \leq h(1)$ in the Harmonic Chain, with the equality holding only for $\ell=1$.
\end{observation}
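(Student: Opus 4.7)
The plan is to use the closed form for $h(\ell)$ from Theorem~\ref{thm:harmonic_EET} together with the strict monotonicity of the harmonic recursion from Observation~\ref{ob:harmonic_recursion_increasing} to reduce the statement to a trivial bound on a sum by its largest term.

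Concretely, I would first write $h(1) = k\alpha_k$ directly from Theorem~\ref{thm:harmonic_EET}, so that the claimed inequality $h(\ell)/\ell \leq h(1)$ becomes
\[
\sum_{i=k-\ell+1}^{k} \alpha_i \;\leq\; \ell\,\alpha_k.
\]
The sum on the left has exactly $\ell$ terms, indexed by $i \in \{k-\ell+1,\ldots,k\}$. Every such index satisfies $i \leq k$, and by Observation~\ref{ob:harmonic_recursion_increasing} the sequence $\alpha_i$ is strictly increasing, so $\alpha_i \leq \alpha_k$ for each term, with equality precisely when $i = k$. Summing these $\ell$ inequalities gives the desired bound.

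For the equality case, note that the upper bound $\ell\alpha_k$ is attained iff each of the $\ell$ terms equals $\alpha_k$. Since strict monotonicity forces $\alpha_i < \alpha_k$ whenever $i < k$, this can only happen when the summation range contains a single index equal to $k$, i.e. when $\ell = 1$. Conversely, for $\ell = 1$ the sum reduces to $\alpha_k$ and equality holds trivially. There is no real obstacle here; the observation is essentially a corollary of the explicit formula in Theorem~\ref{thm:harmonic_EET} combined with the monotonicity of $\alpha_\ell$, and the main purpose of stating it is to have it available for the potential-based analysis in Section~\ref{subsection:potential_analysis}.
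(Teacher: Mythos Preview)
Your proof is correct and follows essentially the same approach as the paper's: both use the closed form from Theorem~\ref{thm:harmonic_EET} and the strict monotonicity of $\alpha_i$ from Observation~\ref{ob:harmonic_recursion_increasing} to bound the sum term-by-term. Your one-step bound $\alpha_i \leq \alpha_k$ is in fact slightly cleaner than the paper's two-step version (which first bounds the non-top terms by $\alpha_{k-1}$ and then by $\alpha_k$), but the underlying argument is identical.
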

\begin{proof}
Fix any $\ell\in \{2,3,\dotsc,k\}$. Then: 
$$\frac{h(\ell)}{\ell} = \frac{k\alpha_k + k\sum_{i=k-\ell}^{k-2}\alpha_{i+1}}{\ell} 
    <  \frac{k\alpha_k + k(\ell-1)\alpha_{k-1}}{\ell} 
    < \frac{k\alpha_k + (\ell-1)k\alpha_k}{\ell} = k\alpha_k = h(1).$$
where both inequalities hold from Observation~\ref{ob:harmonic_recursion_increasing} which states that $\alpha_i$ is \textit{strictly} increasing.
\end{proof}

Suppose that the adversary moves $\ell$ servers whenever the algorithm reaches its configuration and then it doesn't move until the algorithm reaches its new configuration. Intuitively, the competitive ratio would be $\frac{h(\ell)}{\ell}$ which is maximized for $\ell=1$ by Observation~\ref{cor:h(n)/n}. This means that $h(1)=k\cdot\alpha_k$ is an upper bound for the competitive ratio of the Harmonic Algorithm. While this intuition is very important, it is not enough to formally prove Theorem~\ref{thm:upper_bound}. However, motivated by it, we will define the potential between two configurations of Hamming distance $\ell$ as $h(\ell)$. Formally,

\begin{definition}[Potential Function]
The potential between two configurations $p,q\in [n]^k$ is defined as
$$\phi(p,q) = h(d_H(p,q)).$$
%where for any $\ell\in\{0,1,\dots,k\}$ we define 
%$$h(\ell)=k\sum_{i=k-\ell+1}^k\alpha_i.$$.
\end{definition}

\subsection{Bounding the Competitive Ratio}\label{subsection:potential_analysis}

In this section, we will prove the upper bound of Theorem \ref{thm:upper_bound} by using the potential we defined in Section \ref{subsection:selecting_potential}. Fix any request sequence $\bar{r}=[r^1,\dotsc,r^T]$ for any $T\in\N$ such that $r^t\in [n]^k\;\forall t\in [T]$. Let $q^t\in [n]^k$ be the configuration of the Harmonic Algorithm and $\pazocal{A}^t \in [n]^k$ the configuration of the adversary after serving request $r^t$. Also, let $q^0=\pazocal{A}^0$ be the initial configuration of the instance. We will prove that when the adversary moves $x$ servers the increase in potential is at most $k \cdot \alpha_k \cdot x$ and when the Harmonic Algorithm moves one server, the expected decrease in potential is at least $1$. Then, using these properties, we will prove Theorem \ref{thm:upper_bound}.

To simplify the analysis, we make the following observation for the potential function.
\begin{observation}\label{obs:Dh}
For any $\ell,\ell' \in \{0,1,\dotsc, k\}$ such that $\ell < \ell'$ it holds that 
$$h(\ell')-h(\ell) = k\sum_{i=\ell}^{\ell'-1}\alpha_{k-i}$$
\end{observation}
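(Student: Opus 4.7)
The plan is to prove Observation~\ref{obs:Dh} by a direct calculation starting from the closed-form expression for $h(\ell)$ given in Theorem~\ref{thm:harmonic_EET}. Since both $h(\ell')$ and $h(\ell)$ are written as sums of consecutive terms $\alpha_i$ (each scaled by $k$), and since $\ell < \ell'$ guarantees that the summation range for $h(\ell)$ is a suffix of that of $h(\ell')$, the difference should telescope cleanly into a single sum over the ``extra'' indices that appear in $h(\ell')$ but not in $h(\ell)$.

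More concretely, I would fix $\ell < \ell'$ and write
\[
h(\ell') - h(\ell) = k\sum_{i=k-\ell'+1}^{k}\alpha_i - k\sum_{i=k-\ell+1}^{k}\alpha_i.
\]
Because $k-\ell'+1 \le k-\ell+1 \le k+1$, the second sum is exactly the tail of the first sum starting at index $k-\ell+1$, so after cancellation only the terms with indices in $\{k-\ell'+1,\dotsc,k-\ell\}$ survive, giving
\[
h(\ell') - h(\ell) = k\sum_{j=k-\ell'+1}^{k-\ell}\alpha_j.
\]
A change of variable $i = k - j$ (so $j = k-i$) transforms the index set $j \in \{k-\ell'+1,\dotsc,k-\ell\}$ into $i \in \{\ell,\dotsc,\ell'-1\}$, and the summand $\alpha_j$ becomes $\alpha_{k-i}$. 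This yields exactly the claimed identity $h(\ell') - h(\ell) = k\sum_{i=\ell}^{\ell'-1}\alpha_{k-i}$.

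There is essentially no obstacle here: the observation is a purely symbolic restatement of Theorem~\ref{thm:harmonic_EET}, repackaged in a form that will be convenient for the potential-based analysis that follows (where the natural parameter is the change in Hamming distance from the algorithm's side, indexed by $i = d_H$ rather than by the reversed index $k - d_H$). The only minor care needed is to verify that the boundary cases $\ell = 0$ and $\ell' = k$ are handled correctly by the index manipulation, which they are: when $\ell = 0$ the subtracted sum is empty (since its lower limit $k+1$ exceeds its upper limit $k$), and when $\ell' = k$ the first sum runs over all of $\{1,\dotsc,k\}$, matching $h(k)$ as expected.
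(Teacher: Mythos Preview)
Your proof is correct and takes essentially the same approach as the paper: both arguments compute the difference directly from the closed form of Theorem~\ref{thm:harmonic_EET}. The paper telescopes $h(\ell')-h(\ell)=\sum_{i=\ell}^{\ell'-1}(h(i+1)-h(i))$ and evaluates each consecutive gap as $k\alpha_{k-i}$, whereas you subtract the two sums in one step and then reindex; these are the same computation organized slightly differently.
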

\begin{proof} By telescoping we have
$$h(\ell')-h(\ell) = \sum_{i=\ell}^{\ell'-1}(h(i+1)-h(i)) =\sum_{i=\ell}^{\ell'-1}(k\sum_{j=k-i}^k\alpha_j - k\sum_{j=k-i+1}^k\alpha_j) = k\sum_{i=\ell}^{\ell'-1}\alpha_{k-i}$$
where the second equality holds by the definition of the potential. \end{proof}

Using this observation, we are now ready to prove the following lemmata:
\begin{lemma}[Adversary Moves]\label{lemma:opt_moves}
For any $t\in\{1,\dotsc,T\}$ it holds that
$$\phi(q^{t-1},\pazocal{A}^t)-\phi(q^{t-1},\pazocal{A}^{t-1}) \leq k \cdot \alpha_k \cdot  d_H(\pazocal{A}^t,\pazocal{A}^{t-1}).$$
\end{lemma}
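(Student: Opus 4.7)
The plan is to translate the inequality into a statement about Hamming distances and then apply Observation~\ref{obs:Dh} together with the monotonicity of $\alpha_\ell$. Let $\ell = d_H(q^{t-1},\pazocal{A}^{t-1})$ and $\ell' = d_H(q^{t-1},\pazocal{A}^t)$, and let $x = d_H(\pazocal{A}^t,\pazocal{A}^{t-1})$. The first step I would take is to invoke the triangle inequality for the Hamming metric to conclude that $\ell' \le \ell + x$; this captures the intuition that each coordinate where the adversary changes its server can alter the number of mismatched coordinates with $q^{t-1}$ by at most one.

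Next I would split into two cases according to whether the Hamming distance grew or shrank. If $\ell' \le \ell$, then since $h$ is (weakly) increasing by Observations~\ref{ob:harmonic_recursion_increasing} and Theorem~\ref{thm:harmonic_EET}, the left-hand side $h(\ell') - h(\ell)$ is non-positive and the bound is immediate. The interesting case is $\ell' > \ell$, where I would apply Observation~\ref{obs:Dh} to write
\begin{equation*}
\phi(q^{t-1},\pazocal{A}^t)-\phi(q^{t-1},\pazocal{A}^{t-1}) \;=\; h(\ell') - h(\ell) \;=\; k \sum_{i=\ell}^{\ell'-1}\alpha_{k-i}.
\end{equation*}
Since $k-i \le k$ for every $i \ge 0$ and the harmonic recursion is strictly increasing (Observation~\ref{ob:harmonic_recursion_increasing}), each summand satisfies $\alpha_{k-i} \le \alpha_k$. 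The sum contains $\ell' - \ell \le x$ terms, so it is bounded by $k\cdot\alpha_k\cdot(\ell'-\ell) \le k\cdot\alpha_k\cdot x$, which is exactly the claimed inequality.

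I do not expect any real obstacle here; the argument is essentially a one-line telescoping bound once the right book-keeping variables $\ell,\ell',x$ are set up. The only subtle point worth emphasising in the write-up is the use of the Hamming triangle inequality to relate $\ell'-\ell$ to $x$, since a naive expectation might be that the adversary's $x$ moves could change the Hamming distance to $q^{t-1}$ by more than $x$; this is ruled out by the fact that each coordinate is affected by at most one of the adversary's moves.
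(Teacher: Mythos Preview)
Your proposal is correct and follows essentially the same approach as the paper: define the two Hamming distances, handle the non-increasing case trivially, and in the increasing case apply Observation~\ref{obs:Dh}, bound each $\alpha_{k-i}$ by $\alpha_k$, and then use the Hamming triangle inequality to replace $\ell'-\ell$ by $d_H(\pazocal{A}^t,\pazocal{A}^{t-1})$. In fact your write-up is slightly cleaner, since you quote the summand $\alpha_{k-i}$ exactly as in Observation~\ref{obs:Dh}.
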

\begin{proof}
Let $\ell^{t-1}=d_H(q^{t-1},\pazocal{A}^{t-1})$ and $\ell^t=d_H(q^{t-1},\pazocal{A}^{t})$. Clearly, $\ell^{t-1},\ell^{t}\in \{0,1,\dotsc, k\}$. Since the potential $h(\ell)$ is strictly increasing on $\ell$, if $\ell^{t}\leq \ell^{t-1}$ then this means that the adversary's move didn't increase the potential and then the Lemma follows trivially. Thus, we only need to prove the Lemma for $0 \leq \ell^{t-1} < \ell^{t} \leq k$. We have:

\begin{equation}\label{eq:adv_moves}
h(\ell^t)-h(\ell^{t-1}) = k\sum_{i=\ell^{t-1}}^{\ell^t -1}\alpha_i \leq (\ell^{t}-\ell^{t-1})k\alpha_k    
\end{equation}
where the equality is given from Observation~\ref{obs:Dh} and the inequality from the fact that the recursion $\alpha_\ell$ is increasing. Thus, we have proven that $\phi(q^{t-1},\pazocal{A}^t)-\phi(q^{t-1},\pazocal{A}^{t-1}) \leq (\ell^{t}-\ell^{t-1})\cdot k\cdot\alpha_k$. To conclude the proof of the Lemma, by the triangle inequality of the Hamming distance we have
$$d_H(q^{t-1},\pazocal{A}^{t-1}) + d_H(\pazocal{A}^{t-1},\pazocal{A}^{t}) \geq  d_H(q^{t-1},\pazocal{A}^{t})$$
which gives $\ell^{t}-\ell^{t-1} \leq d_H(\pazocal{A}^{t-1},\pazocal{A}^t)$. Combined with~\eqref{eq:adv_moves}, we get the Lemma.
\end{proof}

\begin{lemma}[Harmonic Moves]\label{lemma:alg_moves}
For any $t\in\{1,\dots,T\}$ it holds that
$$\mathbb{E}[\phi(q^{t-1},\pazocal{A}^t)-\phi(q^{t},\pazocal{A}^{t})] \geq d_H(q^{t-1},q^t).$$
\end{lemma}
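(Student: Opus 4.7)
The plan is to reduce the inequality to a one-move analysis via a case split. Either $q^{t-1}$ already serves $r^t$ (i.e. $q^{t-1}_i = r^t_i$ for some $i$), in which case the Harmonic Algorithm does nothing, $q^t = q^{t-1}$, and both sides are $0$; or it does not, in which case $q^{t-1}_i \neq r^t_i$ for every $i \in [k]$ and the algorithm moves exactly one server chosen uniformly at random. In the nontrivial case we have $d_H(q^{t-1},q^t) = 1$, so the target is to show that the expected drop in potential is at least $1$.

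To compute the expectation, I would classify the $k$ equally likely moves according to how they affect the Hamming distance from $\pazocal{A}^t$. Let $\ell = d_H(q^{t-1},\pazocal{A}^t)$, $S = \{i : q^{t-1}_i \neq \pazocal{A}^t_i\}$ (so $|S| = \ell$), and $S' = \{i : \pazocal{A}^t_i = r^t_i\}$. Since $\pazocal{A}^t$ serves $r^t$ we have $|S'| \geq 1$, and since $q^{t-1}_i \neq r^t_i$ for all $i$, any $i \in S'$ also lies in $S$, so $S' \subseteq S$. Writing $s = |S'|$, the three cases are: (i) $i \in S'$, with probability $s/k$, where the algorithm moves server $i$ to $r^t_i = \pazocal{A}^t_i$ and the Hamming distance drops to $\ell - 1$; (ii) $i \in S \setminus S'$, with probability $(\ell-s)/k$, where the new position $r^t_i$ still differs from $\pazocal{A}^t_i$, so the distance stays $\ell$; and (iii) $i \notin S$, with probability $(k-\ell)/k$, where a previously matched coordinate gets disturbed and the distance rises to $\ell+1$.

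Plugging into the definition of $\phi$ and using Observation~\ref{obs:Dh} (so that $h(\ell)-h(\ell-1) = k\,\alpha_{k-\ell+1}$ and $h(\ell+1)-h(\ell) = k\,\alpha_{k-\ell}$), the expected decrease becomes
\[
\frac{s}{k}\bigl(h(\ell)-h(\ell-1)\bigr) \;-\; \frac{k-\ell}{k}\bigl(h(\ell+1)-h(\ell)\bigr) \;=\; s\,\alpha_{k-\ell+1} \;-\; (k-\ell)\,\alpha_{k-\ell}.
\]
Setting $j = k-\ell+1$ and invoking the harmonic recursion $\alpha_j = 1 + (j-1)\alpha_{j-1}$, the right-hand side equals $(s-1)\,\alpha_j + 1$, which is at least $1$ because $s \geq 1$ and $\alpha_j \geq 1$. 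This gives the desired bound.

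The main obstacle is really bookkeeping: one has to notice that the requirement ``$\pazocal{A}^t$ serves $r^t$ while $q^{t-1}$ does not'' forces $S' \subseteq S$ with $|S'| \geq 1$, which is precisely what makes the unfavourable case (iii) affordable. Once that inclusion is in place, the cancellation via $\alpha_j - (j-1)\alpha_{j-1} = 1$ turns the messy expectation into the clean bound $(s-1)\alpha_j + 1$; no further case analysis or estimate on $\alpha_j$ is needed.
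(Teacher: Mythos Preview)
Your proof is correct and follows essentially the same approach as the paper: the same trivial/nontrivial case split, the same three-way classification of the uniformly chosen move, and the same cancellation via $\alpha_j = 1 + (j-1)\alpha_{j-1}$ to reduce the expected drop to $(s-1)\alpha_j + 1 \geq 1$. If anything, you are slightly more explicit than the paper in arguing that $S' \subseteq S$, which is what guarantees the three cases partition $[k]$.
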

\begin{proof}
If the Harmonic Algorithm serves the request, then $q^t=q^{t-1}$ and the Lemma follows trivially. Otherwise, by definition, it moves to a configuration $q^t$ such that $d_H(q^{t-1},q^t)=1$. Let $\ell^{t-1}=d_H(q^{t-1},\pazocal{A}^{t})$ and $\ell^t=d_H(q^{t},\pazocal{A}^{t})$. Also, let $C = |\{i: \pazocal{A}^t_i = r^t_i\}|$, i.e., the number of the adversary's servers that could serve the current request. By definition, $\pazocal{A}^t$ must serve $r^t$ which gives $C\geq 1$. Furthermore, $q^{t-1}$ doesn't serve the request but $\pazocal{A}^t$ does, and thus $\ell^{t-1}\geq 1$.

Recall that the Harmonic Algorithm randomly moves at a metric with equal probabilities in order to serve a request. If it moves in any of the $C$ metrics where the adversary serves the request, we get $\ell^{t}=\ell^{t-1}-1$ and the potential decreases with probability $\frac{C}{k}$. If it moves on any of the $k-\ell^{t-1}$ metrics where $a^t_i=q^{t-1}_i$, we get $\ell^{t}=\ell^{t-1}+1$ and the potential increases with probability $\frac{k-\ell^{t-1}}{k}$. In any other case, we have $\ell^{t}=\ell^{t-1}$ and the potential doesn't change. To simplify the notation, we define $j\in\{1,\dotsc,k\}$ as $j=k - \ell^{t-1} +1$. We have:

\begin{align*}
\mathbb{E}[\phi(q^{t-1},\pazocal{A}^t)-\phi(q^{t},\pazocal{A}^{t})] 
&= \mathbb{E}[h(\ell^{t-1})-h(\ell^{t})] \\
&= \frac{C}{k}(h(\ell^{t-1})-h(\ell^{t-1}-1)) + \frac{k-\ell^{t-1}}{k}(h(\ell^{t-1})-h(\ell^{t-1}+1)) \\
&= C\alpha_{j} - (j-1)\alpha_{j-1}\\
&= C\alpha_{j} - (\alpha_{j} -1) = (C-1)\alpha_j + 1 \\
&\geq 1 = d_h(q^{t-1},q^t)
\end{align*}
where the first equality follows from the definition of the potential, the second equality from the possible changes in the Hamming distance between the algorithm and the adversary, the third equality follows from Observation~\ref{obs:Dh} and the definition of $j$, the fourth equality follows from the definition of the recursion $\alpha_\ell$ and the inequality follows from $C\geq 1$.
\end{proof}

\paragraph{Proof of Theorem~\ref{thm:upper_bound}} We are now ready to prove Theorem \ref{thm:upper_bound}. By combining lemmata \ref{lemma:opt_moves} and \ref{lemma:alg_moves}, we get that for any $t\in\{1,\dotsc, T\}$, the expected difference in potential is $$\mathbb{E}[\Delta\phi^t]= \mathbb{E}[\phi(q^t,\pazocal{A}^t)-\phi(q^{t-1},\pazocal{A}^{t-1})] \leq k\alpha_kd_H(\pazocal{A}^t,\pazocal{A}^{t-1}) - d_H(q^{t-1},q^t)$$

Now, let $\ADV = \sum_{t=1}^Td_H(\pazocal{A}^t,\pazocal{A}^{t-1})$ be used to denote the total cost of the adversary and $\ALG = \sum_{t=1}^Td_H(q^t,q^{t-1})$ be used to denote the expected cost of the Harmonic Algorithm. Summing over all $t\in\{1,2,\dotsc, T\}$ we finally get
$$\sum_{t=1}^T\Delta\phi^t = \phi(q^T,\pazocal{A}^T)-\phi(q^0,\pazocal{A}^0) \leq k\alpha_k \cdot \ADV - \ALG$$
and since $\pazocal{A}^0=q^0$ (i.e., $\phi(q^0,\pazocal{A}^0)=0$) and $\phi(q^T,\pazocal{A}^T)\geq 0$, we get that $\ALG \leq k \cdot \alpha_k \cdot \ADV$, which concludes the proof of Theorem \ref{thm:upper_bound}.
%%%%%%%%%%%%%%%%%%%%%%%%%%%%%%%%%%%%%%%%%%%%%%%%%%%%%%%%
%%%%%%%%%%%%%%%%%%%%%%%%%%%%%%%%%%%%%%%%%%%%%%%%%%%%%%%%
%%%%%%%%%%%%%%%%%%%%%%%%%%%%%%%%%%%%%%%%%%%%%%%%%%%%%%%%
%%%%%%%%%%%%%%%%%%%%%%%%%%%%%%%%%%%%%%%%%%%%%%%%%%%%%%%%
\section{Lower Bound}\label{sec:gen_lb}

In this section we prove Theorem~\ref{thm:gen_lower_bound}. More precisely, we construct an adversarial request sequence against any memoryless algorithm and prove that its competitive ratio is lower bounded by the solution of a linear system of $2^k$ equations. Since solving this system directly is possible only for easy cases like $k=2$ or $k=3$, we show how to get a lower bound for the solution (similarly to the approach taken by Chiplunkar and Vishwanathan~\cite{ChV20} for weighted uniform metric spaces) and thus the competitive ratio of any memoryless algorithm.

\paragraph{Organization.} In Section~\ref{subsection:general_lb_instance} we formally define the adversarial request sequence and the intuition behind it. In Section~\ref{subsection:general_lb_equations} we state the linear system of equations that our request sequence results to and prove a lower bound on its solution. This leads to the proof of Theorem~\ref{thm:gen_lower_bound}.

\subsection{Constructing the adversarial instance}\label{subsection:general_lb_instance}

Before we state the adversarial instance, it is useful to give the intuition behind it. It is natural to construct an adversary that moves only when it has the same configuration with the algorithm. %notice that if the configurations of the algorithm and the adversary are different, then in metric spaces with $n\geq 3$ points we can always construct a request that is served by the adversary but not by the algorithm, thus forcing the algorithm to move.

%The first important point is how to move the adversary when it has the same configuration with the algorithm. Intuitively, Observation~\ref{cor:h(n)/n} shows that the competitive ratio of the Harmonic Algorithm was maximized when the adversary moved in only one metric space. Motivated by this, 
In fact, we construct an adversary that moves in only one metric space: the one that the algorithm uses with the smallest probability (ties are broken arbitrarily). Recall that in the analysis of the harmonic algorithm from Section~\ref{section:harmonic_upper_bound}, the competitive ratio is also maximized when in each ``phase'' the adversary starts with only one different server than the algorithm and does not move until the configurations (of algorithm and adversary) match (Observation~\ref{cor:h(n)/n}).

Let $\ALG$ be any online algorithm and $\ADV$ be the adversary.
Consider a ``phase'' to be a part of the request sequence where in the beginning the configurations of $\ALG$ and $\ADV$ coincide and it ends when $\ALG$ matches the configuration of $\ADV$. Since $\ADV$ must serve all requests, in each request $r$ one point $r_i$ is such that $a_i = r_i$; we say that the $i$th position of $\ADV$ is revealed in such a request. Thus every request will reveal to the algorithm exactly one of the positions of the adversary's servers in some metric space $M_i$. The main idea behind our lower bound instance is that, in each request, out of the metric spaces that servers of $\ALG$ and $\ADV$ differ, we reveal to the algorithm the position of $\ADV$ in the metric that $\ALG$ serves with the highest probability; this implies that whenever $\ALG$ and $\ADV$ differ by only one server, this will be in metric $M_k$. Intuitively, this way we exploit best the ``assymetries'' in the distribution of $\ALG$ (this is formalized in Lemma~\ref{lemma:monotonicity}).

\paragraph{The instance.} Recall that any memoryless algorithm for the generalized $k$-server problem on uniform metric spaces is fully characterized by a probability distribution $p=[p_1,p_2,\dotsc,p_k]$ over the $k$-metric  spaces $M_1,M_2,\dots , M_k$. W.l.o.g., we can assume that $p_1\geq p_2 \geq \dots \geq p_k$. Let $q^t,\pazocal{A}^t$ be used to denote the configurations of the algorithm and the adversary after serving request $r^t$ respectively. Also, let $q^0=\pazocal{A}^0$ be used to denote the initial configuration of both the algorithm and the adversary. We will now construct the request sequence. For $t=1,2,\dots,T$:

\begin{enumerate}
    \item Observe $q^{t-1}$, i.e., the algorithm's current configuration.
    \item If $q^{t-1}=\pazocal{A}^{t-1}$, then:
        \subitem $\pazocal{A}^t = [q^0_1 ,q^0_2, \dots , q^0_{k-1}, Z]$ for any $Z\in [n]$ such that $Z\neq \pazocal{A}^{t-1}_k$ and $Z\neq q^{t-1}_k$.

    otherwise:
        \subitem $\pazocal{A}^t = \pazocal{A}^{t-1}$.

    \item Determine $m = \min(\{j: q^{t-1}_j \neq \pazocal{A}^t_j\})$.
    \item Pick any $r^t\in [n]^k$ such that $r^t_m = \pazocal{A}^t_m$ and $r^t_j\neq \pazocal{A}^t_j$, $r^t_j \neq q^{t-1}_j \;\forall j\in [k]\setminus \{m\}$.
\end{enumerate}

Note that for steps 2 and 4, we need to have at least $n \geq 3$ points in order to pick a point that isn't occupied by neither the algorithm's nor the adversary's servers. As we explain in Section~\ref{sec:ext}, this is a necessary requirement; if all metrics have $n=2$ points, then the competitive ratio of the Harmonic Algorithm is $O(2^k)$ and therefore a lower bound of order $k!$ is not possible.

As an example of our instance, for $k=4$, let $\pazocal{A}^{t-1}=[0,0,0,0]$ and $q^{t-1}=[1,0,0,1]$ for some $t$. Clearly, the algorithm and the adversary have different servers in metric $M_1$ and $M_4$. From step 3, $m=\min(1,4)=1$, i.e., $M_1$ is the metric space that the algorithm serves with highest probability out of the metric spaces that it and the adversary have their servers in different points. Then, from step 4, $r^t=[0,2,2,2]$ (actually, the selection of the last three coordinates is arbitrary as long as neither the algorithm nor the adversary have their server on this point).

Notice that $\ADV$ moves one server in metric space $M_k$ whenever it has the same configuration with $\ALG$. On the other hand, $\ALG$ never serves request $r^t$ with configuration $q^{t-1}$ and thus moves at every time step. This means that the competitive ratio of $\ALG$ is lower bounded by the (expected) number of requests it takes for it to reach configuration of $\ADV$.

% In general, a memoryless algorithm moves at each metric space with different probabilities and thus, the Hamming distance is not the right metric to capture the "distance" between the algorithm and the adversary. Instead, the structure of the algorithm is characterized by the subset $S$ of metric spaces with different servers between the algorithm and the adversary, which gives a Markov Chain on $2^k$ states. We use $h(S)$ to denote the expected number of requests it takes the algorithm to reach the adversary when $S$ is the subset of metric spaces with different servers between the algorithm and the adversary initially. 

% From the above, since when the adversary moves the only metric space where the servers between the algorithm and the adversary are in different points is $M_k$, the competitive ratio of any memoryless algorithm is lower bounded by $h(\{k\})$.

\subsection{Proving the Lower Bound}\label{subsection:general_lb_equations}

\paragraph{Our approach.} We define the \textit{state} of the algorithm at time $t$ as $S^t = \{i:q^t_i \neq \pazocal{A}^t_i\}$, i.e., the subset of metric spaces with different servers between the algorithm and the adversary. In this context, $h(S)$ is used to denote the expected number of requests it takes for the algorithm to reach the adversary's configuration, i.e. state $\emptyset$, starting from some state $S\subseteq [k]$. From the request sequence we defined, $h(\{k\})$ is a lower bound for the competitive ratio of any memoryless algorithm.

% We have stated the request sequence and defined the quantity $h(\{k\})$ that bounds the competitive ratio of any memoryless algorithm from below. 
By observing how the state $S$ of the algorithm (and by extension $h(S)$) evolves under the request sequence, we can write down a linear system of $2^k$ equations on the $2^k$ variables $h(S)\;\forall S\subseteq [k]$. In fact, these equations give the EET of a random walk in a Markov Chain of $2^k$ states. We then prove a lower bound on $h(\{k\})$ and thus the competitive ratio of any memoryless algorithm. Notice that for the given instance, if we were analyzing the Harmonic Algorithm, then the Hamming distance between it and the adversary would be captured by the Harmonic Chain and we would immediately get that $h(\{k\})=k\cdot\alpha_k$.

\paragraph{Analysis.} Fix any two different configurations $q,\pazocal{A}$ for the algorithm and the adversary that are represented by state $S=\{i:q_i\neq\pazocal{A}_i\}\neq \emptyset$ with $\min(S)=m$. Then, we know that for the next request $r$ we have constructed it holds that $r_m=\pazocal{A}_m\neq q_m$ and $r_j\neq\pazocal{A}_j\neq q_j \neq r_j$ for any $j\in [k]\setminus\{m\}$. Recall that the memoryless algorithm will randomly move to some state $M_j$ and move to a different configuration $q'=[q_1,\dots, q_{j-1},r_j,q_{j+1},\dots,q_k]$ that is captured by state $S'$. We distinguish between the following three cases:
\begin{enumerate}
    \item If $j\notin S$, then this means that $q_j=\pazocal{A}_j$ and $q'_j = r_j \neq \pazocal{A}_j$ and thus $S'=S\cup\{j\}$.
    \item If $j=m$, then $q_j\neq \pazocal{A}_j$ and $q'_j = \pazocal{A}_j = r_m$ and thus $S'=S\setminus \{m\}$.
    \item If $j\in S\setminus\{m\}$ then $q_j\neq \pazocal{A}_j$ and $q'_j\neq \pazocal{A}_j$ and thus $S'=S$.
\end{enumerate}

Since $h(S)$ denotes the expected number of steps until the state of the algorithm becomes $\emptyset$ starting from $S$, from the above cases we have that for any state $S\neq \emptyset$:
\begin{equation*}
    h(S) = 1 + p_m \cdot h(S\setminus\{m\}) + \sum_{j\notin S}p_j \cdot h(S\cup \{j\}) + \sum_{j\in S\setminus\{m\}}p_j \cdot h(S), \;\; m = \min (S).
\end{equation*}

Combined we the fact that obviously $h(\emptyset)=0$ and $\sum_{j=1}^kp_j =1$, we get the following set of $2^k$ linear equations with $2^k$ variables:

\begin{equation}\label{system:lb}
\left\{
\begin{array}{c}
     h(\emptyset) = 0\\
     p_m(h(S)- h(S\setminus \{m\})) = 1 +\sum_{j\notin S}p_j(h(S\cup \{j\})-h(S)), \quad \forall S\neq\emptyset , m=\min (S)\\
\end{array} 
\right\} 
\end{equation}

Normally, we would like to solve this linear system to compute $h(\{k\})$ and this would be the proven lower bound for the memoryless algorithm. However, even for $k=4$ it is hopeless to find closed form expressions for the solutions of this system. Interestingly, similar equations were studied by Chiplunkar and Vishnawathan~\cite{ChV20} for the weighted uniform metric case. In their study, they showed a monotonicity property on the solutions of their linear system that directly transfers to our setting and is stated in Lemma~\ref{lemma:monotonicity} below. Using this, combined with the special structure of our problem, we show how to derive a lower bound of $k\cdot\alpha_k$ for $h(\{k\})$ instead of solving~\eqref{system:lb} to directly compute it.

\begin{lemma}\label{lemma:monotonicity}
For any $S\subseteq [k]$ with $i,j\in S$ such that $i<j$ (and thus $p_i\geq p_j$), the solutions of linear system \eqref{system:lb} satisfy
$$h(S)-h(S\setminus\{j\}) \geq \frac{p_i}{p_j}(h(S)-h(S\setminus\{i\}))$$
\end{lemma}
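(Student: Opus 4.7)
The plan is to reformulate the inequality in terms of the quantities $y_i(S) := p_i \Delta_i(S) = p_i(h(S) - h(S \setminus \{i\}))$ for $i \in S$. The claim then reads simply $y_j(S) \geq y_i(S)$ whenever $i < j$ are both in $S$, and the scaling by $p_i \geq p_j$ gets absorbed into a clean comparison.

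The first key step is to derive a "local" recurrence for $y_i(S)$ when $i \neq m := \min S$. I would write equation~\eqref{system:lb} both at $S$ and at $S \setminus \{i\}$ (which share the same minimum $m$ since $i > m$), subtract them, and apply the elementary identity
\[
\Delta_m(S) - \Delta_m(S \setminus \{i\}) \;=\; \Delta_i(S) - \Delta_i(S \setminus \{m\}),
\]
which is immediate because both sides equal the symmetric second mixed difference $h(S) - h(S \setminus \{i\}) - h(S \setminus \{m\}) + h(S \setminus \{i,m\})$. Applying the same identity to rewrite each summand $\Delta_l(S \cup \{l\}) - \Delta_l((S \setminus \{i\}) \cup \{l\})$ as $\Delta_i(S \cup \{l\}) - \Delta_i(S)$, rearranging, and multiplying by $p_i$ yields
\[
\bigl(p_i + Q\bigr)\, y_i(S) \;=\; p_m\, y_i(S \setminus \{m\}) + \sum_{l \notin S} p_l\, y_i(S \cup \{l\}), \qquad Q := p_m + \sum_{l \notin S} p_l.
\]
Thus $y_i(S)$ is expressed as a deflated weighted combination of its values on the adjacent sets. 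A similar equation holds for $y_j(S)$, with $p_i$ replaced by $p_j$ on the left but the same right-hand side structure.

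The second step is a cross-multiplication: $y_j(S) \geq y_i(S)$ is equivalent to
\[
(p_i + Q)\Bigl[p_m y_j(S\setminus\{m\}) + \sum_{l \notin S} p_l y_j(S\cup\{l\})\Bigr] \;\geq\; (p_j + Q)\Bigl[p_m y_i(S\setminus\{m\}) + \sum_{l \notin S} p_l y_i(S\cup\{l\})\Bigr].
\]
If one inductively has the pointwise domination $y_j(T) \geq y_i(T)$ at every adjacent state $T \in \{S \setminus \{m\}\} \cup \{S \cup \{l\} : l \notin S\}$, then the bracketed numerator for $j$ is at least that for $i$, and since $p_i + Q \geq p_j + Q$ (using $p_i \geq p_j$) the product inequality follows. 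The boundary cases when $i = m$ or $j = m$ must be handled by using the direct form of~\eqref{system:lb} for whichever of $y_i, y_j$ corresponds to the minimum, but the same cross-multiplicative comparison against $y$-values on strictly larger sets $S \cup \{l\}$ (obtained from equation~(A) applied to $S$) carries through.

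The main obstacle is the dual nature of the recurrence: adjacent states come in two flavors, the strictly larger sets $S \cup \{l\}$ and the strictly smaller set $S \setminus \{m\}$ (which has strictly larger minimum), so no single-parameter induction on $|S|$ suffices. My plan is to handle this by a combined induction mirroring the one of Chiplunkar and Vishwanathan~\cite{ChV20}: the induction is performed on a lexicographic order tailored so that both kinds of neighbors are strictly earlier than $S$ for the specific pair $(i,j)$. The structural form of our recurrence is identical to theirs (with the transition probabilities $p_l$ playing the role of their weights), so the monotonicity argument transfers once the recurrence above is in place; the adaptation affects only the arithmetic, not the induction scheme.
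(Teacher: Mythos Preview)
Your approach is essentially the same as the paper's: both reduce the lemma to the monotonicity result of Chiplunkar and Vishwanathan~\cite{ChV20}. The paper does this more economically by observing that the substitution $\phi(S) = h([k]) - h([k]\setminus S)$ turns system~\eqref{system:lb} into exactly their system of equations, so their Lemma~3.3 applies verbatim, whereas you re-derive their recurrence directly in the $h$-variables before invoking their induction scheme.
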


The proof is deferred to Appendix~\ref{app:Chiplunkar}.
%\begin{proof}
%The first part of the proof is a variable change on the variables of linear system \eqref{system:lb} in order to match the equations of Chiplunkar and Vishnawathan in~\cite{ChV20}. Then, we show how one of their proven theorems directly implies the Lemma. The formal proof is given to Appendix.
%\end{proof}
Let us first see the intuition behind the inequality of Lemma~\ref{lemma:monotonicity}. Let $S$ be the subset of metric spaces where the servers of $\ALG$ and $\ADV$ occupy different points: then, in the next move, the expected time to match $\ADV$ decreases the most, if $\ALG$ matches first the $j$th server of the adversary (i.e., the ``state'' changes from $S$ to $S \setminus \lbrace j \rbrace$) where $j$ is the metric with the smallest the probability $p_j$. This explains why in our adversarial instance we choose to reveal to $\ALG$ the location of $\ADV$ in the metric it serves with the highest probability: this makes sure that the decrease in the expected time to reach $\ADV$ is minimized.

Using Lemma~\ref{lemma:monotonicity}, we can now prove the following:
\begin{lemma}\label{lemma:induction_property}
For any $S\subseteq [k]$ with $S\neq\emptyset$ and $i\in S$, the solutions of linear system \eqref{system:lb} satisfy
$$p_i(h(S)-h(S\setminus \{i\})) \geq 1 + \sum_{j\notin S}p_j(h(S\cup\{j\})-h(S))$$
\end{lemma}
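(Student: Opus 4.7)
The plan is to derive this lemma directly from the defining equation of the linear system \eqref{system:lb} combined with the monotonicity property in Lemma~\ref{lemma:monotonicity}. First, I would observe that for $m = \min(S)$, the system \eqref{system:lb} already gives us the corresponding statement \emph{with equality}:
\[
p_m\bigl(h(S) - h(S\setminus\{m\})\bigr) \;=\; 1 + \sum_{j\notin S} p_j\bigl(h(S\cup\{j\}) - h(S)\bigr).
\]
So it suffices to reduce the case of an arbitrary $i \in S$ to the case $i = m$ by showing
\[
p_i\bigl(h(S) - h(S\setminus\{i\})\bigr) \;\geq\; p_m\bigl(h(S) - h(S\setminus\{m\})\bigr).
\]

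Next, I would handle the two cases $i = m$ and $i > m$ separately. The case $i = m$ is immediate. For the case $i > m$, since $m, i \in S$ with $m < i$, we are in the hypothesis of Lemma~\ref{lemma:monotonicity} (with its ``$i$'' playing the role of our $m$, and its ``$j$'' playing the role of our $i$), which yields
\[
h(S) - h(S\setminus\{i\}) \;\geq\; \frac{p_m}{p_i}\bigl(h(S) - h(S\setminus\{m\})\bigr).
\]
Multiplying through by $p_i > 0$ gives exactly $p_i(h(S) - h(S\setminus\{i\})) \geq p_m(h(S)-h(S\setminus\{m\}))$, and chaining with the equation above for $m$ completes the proof.

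There is no real obstacle here beyond correctly aligning the indices of Lemma~\ref{lemma:monotonicity} with the current setting and verifying that the inequality in that lemma is oriented the right way (the larger-indexed server, which has the smaller probability, corresponds to the larger drop in $h$). The intuitive content is exactly the point emphasized in the discussion after Lemma~\ref{lemma:monotonicity}: among all $i \in S$, the equality in \eqref{system:lb} is attained at $i = m = \min(S)$, i.e., at the server with the largest probability, and any other choice of $i$ can only make the left-hand side larger. This is precisely why the adversarial instance reveals the position of $\ADV$ in the metric $M_m$ where the algorithm moves with the highest probability among the currently-mismatched coordinates.
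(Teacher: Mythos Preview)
Your proposal is correct and follows essentially the same argument as the paper: apply Lemma~\ref{lemma:monotonicity} to bound $p_i(h(S)-h(S\setminus\{i\}))$ from below by $p_m(h(S)-h(S\setminus\{m\}))$ for $m=\min(S)$, and then substitute the defining equation of system~\eqref{system:lb}. The only cosmetic difference is that you separate out the case $i=m$ explicitly (which is arguably slightly cleaner, since Lemma~\ref{lemma:monotonicity} as stated requires strict inequality between the indices), whereas the paper treats both cases at once.
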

\begin{proof}
Fix any non-empty set $S\subseteq [k]$ and any $i\in S$. Let $m=\min(S)\leq i$. Then, by Lemma \ref{lemma:monotonicity} we have
$$p_i(h(S)-h(S\setminus \{i\})) \geq p_m (h(S)-h(S\setminus \{m\}))$$
Since $m=\min(S)$, and we study the solution of linear system \eqref{system:lb}, we have
$$p_m (h(S)-h(S\setminus \{m\})) = 1 + \sum_{j\notin S}p_j(h(S\cup\{j\})-h(S))$$
and the lemma follows.
\end{proof}

We are now ready to prove the main theorem of this section.
\begin{theorem}\label{th:h(k)_lb}
The solution of linear system \eqref{system:lb} satisfies
$$h(\{k\})\geq \frac{\alpha_k}{p_k}$$
\end{theorem}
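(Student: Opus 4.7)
The plan is to prove a strengthened statement by downward induction on $|S|$: for every nonempty $S \subseteq [k]$ and every $i \in S$,
\[
p_i\bigl(h(S) - h(S \setminus \{i\})\bigr) \;\geq\; \alpha_{k - |S| + 1}.
\]
Theorem~\ref{th:h(k)_lb} is then the specialization $S = \{k\}$, $i = k$, since $h(\emptyset) = 0$ gives $p_k h(\{k\}) \geq \alpha_k$.

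For the base case I take $S = [k]$. Since there is no $j \notin S$, the sum on the right of Lemma~\ref{lemma:induction_property} is empty and the lemma reduces to $p_i(h([k]) - h([k] \setminus \{i\})) \geq 1 = \alpha_1$ for every $i \in [k]$, which is exactly the claim at $|S| = k$.

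For the inductive step I assume the claim at all sets of size $|S| + 1$ and fix an arbitrary $i \in S$. Lemma~\ref{lemma:induction_property} yields
\[
p_i\bigl(h(S) - h(S \setminus \{i\})\bigr) \;\geq\; 1 + \sum_{j \notin S} p_j\bigl(h(S \cup \{j\}) - h(S)\bigr).
\]
For each $j \notin S$, the set $S \cup \{j\}$ has size $|S|+1$ and contains the index $j$, so applying the inductive hypothesis to $S \cup \{j\}$ with that index gives
\[
p_j\bigl(h(S \cup \{j\}) - h(S)\bigr) \;\geq\; \alpha_{k - |S|}.
\]
Since there are exactly $k - |S|$ values of $j$ outside $S$, summing the bound produces
\[
p_i\bigl(h(S) - h(S \setminus \{i\})\bigr) \;\geq\; 1 + (k - |S|)\,\alpha_{k - |S|} \;=\; \alpha_{k - |S| + 1},
\]
where the last equality is the defining harmonic recursion $\alpha_\ell = 1 + (\ell-1)\alpha_{\ell-1}$ with $\ell = k - |S| + 1$. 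This closes the induction.

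The main subtlety is recognizing that the statement has to be strengthened to hold uniformly over all $i \in S$, not only for $i = k$: the induction step sums over $j \notin S$ and invokes the hypothesis on $S \cup \{j\}$ with the varying index $j$. The crucial enabling fact is that the right-hand side of Lemma~\ref{lemma:induction_property} is itself independent of which $i \in S$ is chosen, so the same lower bound $p_i \Delta_i(S) \geq \alpha_{k-|S|+1}$ is obtained for every $i \in S$, and the counting $|[k] \setminus S| = k - |S|$ matches the $(\ell - 1)$ coefficient in the harmonic recursion exactly.
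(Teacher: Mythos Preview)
Your proof is correct and is essentially identical to the paper's own argument: the same strengthened claim $p_i(h(S)-h(S\setminus\{i\}))\ge\alpha_{k-|S|+1}$ is proved by downward induction on $|S|$, with the base case $S=[k]$ and the inductive step invoking Lemma~\ref{lemma:induction_property} followed by the hypothesis on each $S\cup\{j\}$. The only cosmetic difference is that you obtain the base case from Lemma~\ref{lemma:induction_property} (empty sum $\Rightarrow\,{\geq}1$) whereas the paper cites the linear system~\eqref{system:lb} directly; your version is arguably cleaner since the system alone only yields the equality for $i=\min(S)$.
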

\begin{proof}
In order to prove the theorem, it suffices to show that for any $S\subseteq [k]$ such that $S\neq\emptyset$ and $i\in S$, it holds that
$$p_i(h(S)-h(S\setminus\{i\})) \geq \alpha_{k-|S|+1}$$
Then, by setting $S=\{k\}$ ($|S|=1$) and $i=k\in S$, we get $p_k(h(\{k\})-h(\emptyset)) \geq \alpha_k$, and since $h(\emptyset)=0$ by definition, the Theorem follows. It remains to prove the desired property. This can be shown by induction on the size of $S$.

\textit{Base case}: If $|S|=k$ (this means that $S=[k])$ then for any $i\in S$, by \eqref{system:lb} we have $$p_i(h(S)-h(S\setminus \{i\})) = 1 = \alpha_1 = \alpha_{k-|S|+1}.$$

\textit{Inductive hypothesis}: Suppose that for any $S\subseteq [k]$ with $|S|=\ell>1$ and any $i\in S$, we have $$p_i(h(S)-h(S\setminus \{i\}))\geq \alpha_{k-\ell+1}.$$   

\textit{Inductive step}: Let $S\subseteq [k]$ be any set with $|S|=\ell-1>0$ and $i\in S$ be any element of this set. By Lemma \ref{lemma:induction_property}, we have that $$p_i(h(S)-h(S\setminus \{i\})) \geq 1 + \sum_{j\notin S}p_m(h(S\cup\{j\})-h(S))$$
Now, for any $j\notin S$ we can use the hypothesis on the set $S\cup\{j\}$ with size $\ell$. Thus, we have
$$p_j(h(S\cup\{j\})-h(S)) \geq \alpha_{k-\ell+1} = \alpha_{k-|S|}$$
for any $j\notin S$. Combining, we get
$$p_i(h(S)-h(S\setminus \{i\})) \geq 1 + (k-|S|)\alpha_{k-|S|} = \alpha_{k-|S|+1}.$$
\end{proof}

\paragraph{Proof of Theorem~\ref{thm:gen_lower_bound}.} Since $p_1\geq p_2 \geq \dots \geq p_k$, we have that $p_k\leq \frac{1}{k}$. Thus, by Theorem~\ref{th:h(k)_lb} we have that $h(\{k\})\geq k\cdot\alpha_k$ for any distribution. Since $h(\{k\}$ is a lower bound for any memoryless algorithm, the Theorem follows.

\begin{corollary}\label{cor:harmonic_is_best}
The Harmonic Algorithm is the only memoryless algorithm with a competitive ratio of $k\cdot\alpha_k$.
\end{corollary}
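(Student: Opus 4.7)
The plan is to extract the corollary directly from Theorem~\ref{th:h(k)_lb} by tracking when its inequality is tight. Note that the ``Harmonic'' direction (Harmonic \emph{achieves} competitive ratio $k\cdot\alpha_k$) is already established by Theorem~\ref{thm:upper_bound} together with the matching lower bound of Theorem~\ref{thm:gen_lower_bound}. So the only remaining task is the ``only'' direction: to show that \emph{every} memoryless algorithm that is not the Harmonic Algorithm has competitive ratio strictly greater than $k\cdot\alpha_k$.

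The first step is an elementary observation about the distribution $p$. Since $p_1\geq p_2\geq\dotsb\geq p_k$ and $\sum_{i=1}^{k}p_i=1$, we have $1=\sum_{i=1}^{k}p_i \geq k\cdot p_k$, with equality if and only if $p_1=p_2=\dotsb=p_k=1/k$, i.e., if and only if the algorithm is the Harmonic Algorithm. Hence for any non-Harmonic memoryless algorithm, $p_k<1/k$ strictly.

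The second step combines this strict inequality with Theorem~\ref{th:h(k)_lb}: for such an algorithm,
\[
h(\{k\}) \;\geq\; \frac{\alpha_k}{p_k} \;>\; k\cdot\alpha_k.
\]
Since the construction of Section~\ref{subsection:general_lb_instance} shows that $h(\{k\})$ is a valid lower bound on the competitive ratio of any memoryless algorithm against an adaptive online adversary, the competitive ratio of any non-Harmonic algorithm is strictly larger than $k\cdot\alpha_k$, which is exactly the content of the corollary.

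I do not expect a real obstacle here: the bound $p_k\leq 1/k$ is immediate, its equality case is standard, and Theorem~\ref{th:h(k)_lb} already does all the heavy lifting. The only subtlety worth double-checking is that the adversarial construction of Section~\ref{subsection:general_lb_instance} is well-defined for every memoryless algorithm (any tie-breaking consistent with $p_1\geq\dotsb\geq p_k$ is admissible), so that the lower bound $h(\{k\})$ applies universally and the strict inequality above is genuinely inherited by the competitive ratio.
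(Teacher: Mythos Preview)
Your proposal is correct and follows essentially the same argument as the paper: observe that any non-Harmonic distribution with $p_1\geq\dotsb\geq p_k$ must have $p_k<1/k$, then apply Theorem~\ref{th:h(k)_lb} to get $h(\{k\})\geq \alpha_k/p_k > k\cdot\alpha_k$, which lower-bounds the competitive ratio. The only cosmetic difference is that the paper also spells out that Harmonic attains exactly $k\cdot\alpha_k$ by combining Theorems~\ref{thm:upper_bound} and~\ref{th:h(k)_lb}, which you handle by citing Theorems~\ref{thm:upper_bound} and~\ref{thm:gen_lower_bound} instead.
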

\begin{proof}
By Theorem~\ref{th:h(k)_lb}, the competitive ratio of the Harmonic Algorithm is at least $k\cdot\alpha_k$ and combined with the upper bound of Theorem~\ref{thm:upper_bound} we get that the Harmonic Algorithm is ($k\cdot\alpha_k$)-competitive. Assuming $p_1\geq \dots \geq p_k$, any other memoryless algorithm will have $p_k<\frac{1}{k}$. Thus, by Theorem~\ref{th:h(k)_lb} its competitive ratio will be lower bounded by $h(\{k\})>k\cdot \alpha_k$ which is strictly worse that the competitive ratio of the Harmonic Algorithm.
\end{proof}

\section{Concluding Remarks}
\label{sec:ext}
%In this section we discuss briefly some extensions of our results.

We provided tight bounds on the competitive ratio of randomized memoryless algorithms for generalized $k$-server in uniform metrics. Combining our results with the work of Chiplunkar and Vishwanathan~\cite{ChV20}, the power of memoryless algorithms in uniform and weighted uniform metrics is completely characterized. It might be interesting to determine the power of memoryless algorithms for other metric spaces such as e.g., weighted stars. However we note that memoryless algorithms are not competitive on arbitrary metric spaces, even for $k=2$; this was shown by Chrobak and Sgall~\cite{CS04} and Koutsoupias and Taylor~\cite{KT04} independently. We conclude with some side remarks.

\paragraph{Metrics with $n=2$ points.} In our lower bound instance from Section~\ref{sec:gen_lb} we require that all metric spaces have at least $n \geq 3$ points. We observe that this is necessary, and that if all metric spaces have $n=2$ points, the Harmonic Algorithm is $O(2^k)$-competitive, thus a lower bound of $k \cdot \alpha_k$ can not be achieved. The underlying reason is the following: in the Harmonic Chain described in Section~\ref{section:harmonic_upper_bound}, while being at state $\ell$ (i.e., having $\ell$ servers different than the adversary), the algorithm moves to state $\ell-1$ with probability $1/k$ and remains in the same state with probability  $ \frac{\ell-1}{k} $. This happens because if $n\geq 3$,
then given the algorithm's configuration $q$ and the adversary's configuration $a\neq q$, we can construct a request $r$ such that $r_i\neq q_i$ and $r_i\neq a_i$ in $\ell-1$ metric spaces. However if $n=2$, the algorithm moves only for $r=\bar{q}$ (i.e., $r$ is the algorithm's anti-configuration) and thus $a_i\neq q_i$ implies that $r_i=a_i$ and if the algorithm moves in $M_i$, then it reduces the number of different servers to $\ell-1$. Thus the Markov Chain used to analyse this instance becomes the following:

\begin{figure}[ht]
    \centering
    \includegraphics[width=\textwidth]{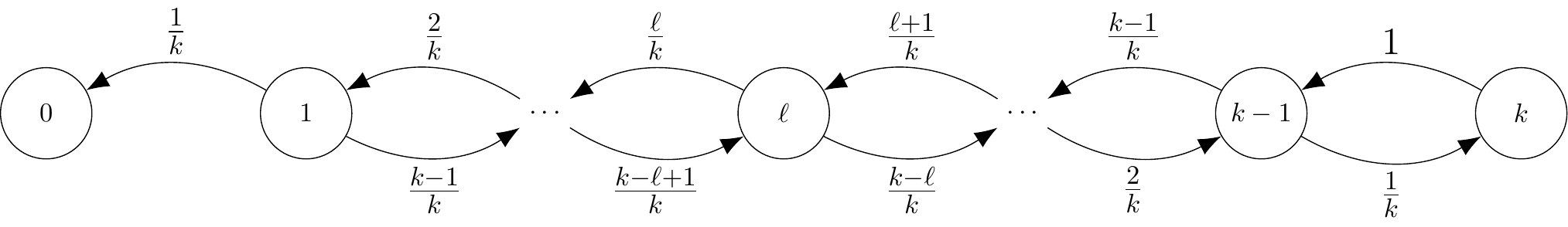}
    \caption{The evolution of the Harmonic Algorithm's Hamming distance from the adversary when $n=2$.}
\end{figure}

Then, as we show in Appendix~\ref{app:harmonic_n=2}, for this random walk, $h(\ell) = O(2^k)$ for any $1\leq \ell \leq k$ and using a similar technique as in Section~\ref{section:harmonic_upper_bound} we can prove that the Harmonic Algorithm is $O(2^k)$-competitive.

\paragraph{Randomized Algorithms with Memory.} We note that for uniform metrics, if memory is allowed and we compare against oblivious adversaries, competitive randomized algorithms are known: Bansal et. al.~\cite{BEKN18} designed a $O(k^3 \log k)$-competitive randomized algorithm with memory; this was recently improved to $O(k^2 \log k)$ by Bienkowski et. al.~\cite{BJS19}. %This is perhaps surprising, since usually in the uniform case the competitive ratio of deterministic algorithms (with memory) and memoryless randomized algorithms is essentially the same; this is true for paging ($k$-server on uniform metric spaces) and generalized $k$-server on weighted uniform metric spaces. However, our results imply that this doesn't hold for the generalized k-server problem on uniform metrics.

\newpage
\bibliographystyle{plain}
\bibliography{references}

\newpage
\appendix
\section{Analysis of the Harmonic Chain (Proof of Theorem~\ref{thm:harmonic_EET})}\label{app:markov_chains}

In this part of the Appendix, our main objective is to prove Theorem~\ref{thm:harmonic_EET} that states the EET of the \textit{Harmonic Chain}, which is a special type of Markov Chain that we use in our analysis. We note that a family of Markov Chains of similar structure, called \textit{Birth-Death Chains}, has been extensively studied in the literature~\cite{birth_death_book}.

\begin{definition}[Birth-Death Chain]
A \textit{Birth-Death Markov Chain} is an important sub-class of discrete-time Markov Chains that limits transitions to only adjacent states. Formally, a Markov process with state-space $\pazocal{X} = \{0,1, \dotsc ,k\}$ for some $k\in\N$ is characterized as a Birth-Death Chain if its transition matrix $P = [P_{ij}]$ has the following form:

\[ 
P_{ij}= \left\{
\begin{array}{ll}
      p_i & ,j=i+1 \\
      q_i & ,j = i-1\\
      1 - p_i - q_i &,j=i \\
      0 & \text{,otherwise} \\
\end{array}  \quad \quad \forall i,j \in \pazocal{X}
\right. 
\]
where $q_0 = 0$ and $p_k = 0$ for the end-points of the chain. A graphical representation of a Birth-Death chain is given in Figure \ref{fig:bdchain}.
\end{definition}

\begin{figure}[ht]
    \centering
    \includegraphics[width=\textwidth]{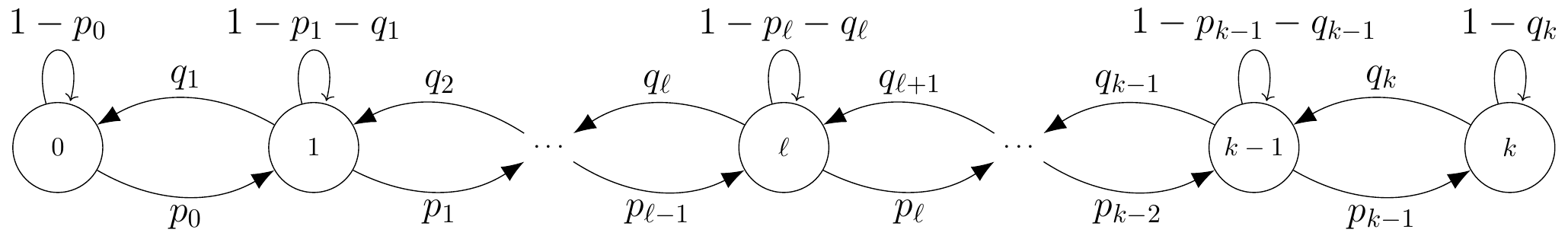} 
    \caption{A Birth Death Markov Chain} \label{fig:bdchain}
\end{figure}
\FloatBarrier

Furthermore, a Birth Death Chain will be called \textit{absorbing} on the state $X=0$ if $p_0 = 0$, which means that the random process will remain on the state $X=0$ if it ever reaches it. As we mentioned, the Harmonic Chain is a special case of Birth-Death Chains. A formal definition is given below:

\begin{definition}[Harmonic Chain]
The Harmonic Chain can be defined as a Birth-Death Chain with state-space $\pazocal{X}=\{0,1,\dotsc, k\}$ for some $k\in\N$, forward probabilities $q_i = \frac{1}{k}$ and backward probabilities $p_i = \frac{k-i}{k}$. A graphical representation of a Harmonic Chain is given in Figure \ref{fig:harchain}.
\end{definition} 
\begin{figure}[ht]
    \centering
    \includegraphics[width=\textwidth]{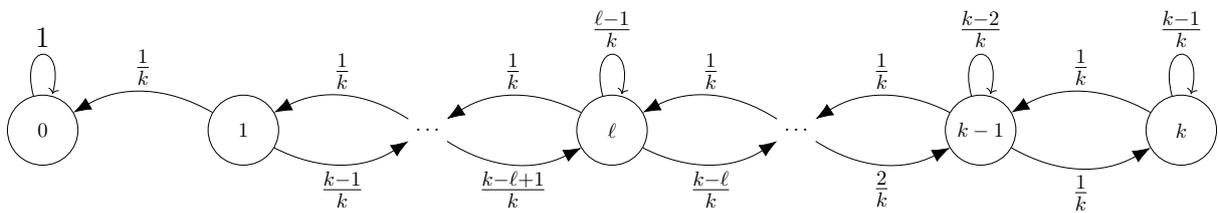} 
    \caption{The Harmonic Chain} \label{fig:harchain}
\end{figure}
\FloatBarrier

We will compute the \textit{Expected Extinction Time} (EET) of a Birth-Death Chain starting from an initial state $\ell\in \pazocal{X}$, which is defined as the expected number of transitions needed to reach $X=0$ for the first time, starting from state $\ell$. Formally, the EET of a Birth-Death chain starting from some state $X_0=\ell\in\pazocal{X}$ is defined as $h(\ell) = \mathbb{E}[N|X_0=\ell]$ where the random variable $N$ is defined as $N=\min_{\tau \geq 0}\{X_\tau=0\}$. A closed-form expression for the EET $h(\ell)$ of a Birth-Death Chain is given by the following Theorem:

\begin{theorem}\label{th:eet_of_bdchain}
For any Birth-Death Chain with states $\pazocal{X}=\{0,1,\dotsc, k\}$ for some $k\in\N$, transition probabilities $p_i,q_i$ and absorption state $X=0$, the EET starting from an initial state $X_0=\ell\in\{1,2,\dotsc, k\}$ is given by
$$h(\ell) = h(1) + \sum_{i=1}^{\ell-1}(\frac{q_1\cdots q_i}{p_1 \cdots p_i}\sum_{j=i+1}^{k}\frac{p_1\cdots p_{j-1}}{q_1 \cdots q_{j}}),$$
where
$$h(1) = \frac{1}{q_1} + \sum_{i=2}^{k} \frac{p_1 \cdots p_{i-1}}{q_1\cdots q_i}.$$
\end{theorem}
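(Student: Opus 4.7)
The plan is to use standard first-step analysis on the birth-death chain. Conditioning on the first step, for every interior state $1 \le \ell \le k-1$ the expected extinction time satisfies
$$h(\ell) = 1 + p_\ell \, h(\ell+1) + q_\ell \, h(\ell-1) + (1 - p_\ell - q_\ell)\, h(\ell),$$
together with $h(0) = 0$ (absorption) and, at the top, $h(k) = 1 + q_k h(k-1) + (1-q_k) h(k)$ since $p_k = 0$.

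The key simplification is to pass to first differences $d_\ell := h(\ell) - h(\ell-1)$. A routine rearrangement turns the interior equation into the one-term linear recursion
$$p_\ell \, d_{\ell+1} = q_\ell \, d_\ell - 1, \qquad 1 \le \ell \le k-1,$$
while the boundary equation at $\ell = k$ collapses to $d_k = 1/q_k$. Reading the recursion backwards, $d_\ell = (1 + p_\ell d_{\ell+1})/q_\ell$, I would carry out a downward induction on $\ell$ to establish the closed form
$$d_\ell = \sum_{j=\ell}^{k} \frac{p_\ell \, p_{\ell+1} \cdots p_{j-1}}{q_\ell \, q_{\ell+1} \cdots q_j},$$
with the $j=\ell$ summand interpreted as $1/q_\ell$. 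The induction step simply substitutes the formula for $d_{\ell+1}$ into $(1 + p_\ell d_{\ell+1})/q_\ell$ and re-indexes, peeling off a fresh $1/q_\ell$ term and absorbing a factor $p_\ell/q_\ell$ into each remaining summand. Multiplying numerator and denominator of each summand by $q_1 \cdots q_{\ell-1}$ and $p_1 \cdots p_{\ell-1}$ respectively rewrites this as $d_\ell = \tfrac{q_1 \cdots q_{\ell-1}}{p_1 \cdots p_{\ell-1}} \sum_{j=\ell}^{k} \tfrac{p_1 \cdots p_{j-1}}{q_1 \cdots q_j}$, the exact shape appearing inside the double sum of the theorem.

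Finally, since $h(0) = 0$, I would telescope $h(\ell) = d_1 + \sum_{i=1}^{\ell-1} d_{i+1}$. The term $d_1 = \sum_{j=1}^{k} \tfrac{p_1 \cdots p_{j-1}}{q_1 \cdots q_j} = \tfrac{1}{q_1} + \sum_{j=2}^{k} \tfrac{p_1 \cdots p_{j-1}}{q_1 \cdots q_j}$ is exactly the claimed value of $h(1)$, and the remaining differences $d_{i+1}$ for $i = 1, \ldots, \ell - 1$ reproduce the stated double sum after re-indexing $\ell \mapsto i+1$. The only delicate point is the bookkeeping in the backward induction: one must track how the product indices shift as $\ell$ decreases and recognize the two equivalent forms of $d_\ell$. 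Beyond that, the argument is entirely mechanical, and Theorem~\ref{thm:harmonic_EET} then follows by substituting $p_i = (k-i)/k$ and $q_i = 1/k$, simplifying the product ratios into factorial form, and invoking the closed form~\eqref{eq:harmonic_recursion_closed_form} for $\alpha_\ell$.
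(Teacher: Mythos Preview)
Your argument is correct, but it diverges from the paper's in one instructive way. Both proofs pass to first differences $d_\ell = h(\ell)-h(\ell-1)$ and rewrite the first-step recurrence as a one-term linear recursion. The paper, however, iterates this recursion \emph{forward} from $\ell=1$, which leaves $h(1)$ as an undetermined parameter; it then pins down $h(1)$ by a separate argument: modifying the chain so that $p_0=1$, invoking the identity $\mathbb{E}[T_0]=1/\pi_0$ between expected return time and stationary distribution, and computing $\pi_0$ via detailed balance. Your approach instead exploits the \emph{top} boundary condition $p_k=0$, which immediately gives $d_k=1/q_k$, and then runs the recursion \emph{backward}; the value of $h(1)=d_1$ then falls out of the induction with no extra work. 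Your route is more self-contained and elementary (no stationary-distribution machinery), while the paper's route makes the connection to classical Markov-chain quantities explicit. Either way the final formulas coincide.
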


While this result is by no means novel, a Birth-Death Chain is usually defined on a state space $\pazocal{X}'=\{0,1,\dotsc\}$ in the literature, while we study Birth-Death Chains with (finite) state space $\pazocal{X}=\{0,1,\dotsc,k\}$. Thus, for the sake of completion, we give a formal proof of Theorem~\ref{th:eet_of_bdchain} in~\ref{subapp:birth_death_chains}.

\paragraph{Proof of Theorem~\ref{thm:harmonic_EET}} With Theorem~\ref{th:eet_of_bdchain} stated, we are now ready to prove Theorem~\ref{thm:harmonic_EET}. We have defined the the Harmonic Chain as a Birth-Death Chain with forward probabilities $q_i=\frac{1}{k}$ and backward probabilities $p_i=\frac{k-i}{k}$. From these probabilities and Theorem~\ref{th:eet_of_bdchain}, it is simple to compute that for any $\ell\in\{1,2,\dots,k\}$, the EET of the Harmonic Chain is given by
\begin{equation}\label{eq:h(l)_not_ready}
h(\ell) = k + k!\sum_{j=0}^{k-2}\frac{1}{j!} + k\sum_{i=k-\ell}^{k-2}i!\sum_{j=0}^i\frac{1}{j!}
\end{equation}

Recall that by definition, $\alpha_\ell = 1 + (\ell-1)\alpha_{\ell-1}$ with $\alpha_1=1$. As we noted in~\eqref{eq:harmonic_recursion_closed_form}, this recursion has a closed form given by
$\alpha_\ell= (\ell-1)!\sum_{i=0}^{\ell-1}\frac{1}{i!}$. Using this to substitute the sums of inverse factorials in~\eqref{eq:h(l)_not_ready}, we get that
$$h(\ell) = k + k!\frac{\alpha_{k-1}}{(k-2)!} + k\sum_{i=k-\ell}^{k-2}\alpha_{i+1} = k\sum_{i=k-\ell+1}^{k}\alpha_i$$
and conclude the proof of Theorem~\ref{thm:harmonic_EET}.

% In this part of the Appendix, we will provide the reader with a variety of results on Markov processes that establish the main machinery we use in order to prove the upper bound of Theorem~\ref{thm:upper_bound} and \MessageFromDimitris{add the theorem for case n=2}. A Markov chain is a stochastic model describing a sequence of possible events in which the probability of each event depends only on the state attained in the previous event. \MessageFromDimitris{Reference this.} In this paper we are interested in Birth-Death processes, which is a special class of Markov Chains. In ~\ref{subapp:birth_death_chains} we give a formal definition of this model and prove a general expression for the Expected Extinction Time (EET) on these processes. Then, in ~\ref{subapp:harmonic_chains} we compute the EET of the Harmonic chain that we used as the potential in the proof of Theorem~\ref{thm:upper_bound}. Lastly, in ~\ref{subapp:binary_chains} we compute the EET of the Binary chain that we used for \MessageFromDimitris{reference the theorem for n=2}.

\subsection{Analysis of the Birth-Death Chain (Proof of Theorem~\ref{th:eet_of_bdchain})}\label{subapp:birth_death_chains}

Obviously, for the EET of the absorbing state $X=0$ we have $h(0) = 0$ be definition. For any other initial state $\ell\in\{1,2,\dotsc, k\}$, by using conditional probabilities on the definition of the EET, we get
\begin{equation}\label{eq:rec_rel}
h(\ell) = 1 + q_\ell h(\ell-1) + p_\ell h(\ell+1) + (1-p_\ell-q_\ell)h(\ell)
\end{equation}

This is a second-order recurrence relation we need to solve in order to compute the EET of a Birth-Death Chain. There are two main points in the proof. Firstly, by solving the equations on the differences $h(\ell)-h(\ell-1)$ instead on $h(\ell)$, we can reduce the problem to solving a first-order recurrence relation that is easier to solve. Secondly, a second-order recurrence relation generally needs two initial conditions $h(0),h(1)$ in order to be solved. Notice that we only know $h(0)=0$. Using well known results from the literature on Markov Chains, we show how to compute $h(1)$ and overcome this technical problem. 

By rearranging~\eqref{eq:rec_rel}, we get

$$h(\ell+1) - h(\ell) = \frac{q_\ell}{p_\ell}(h(\ell)-h(\ell-1) - \frac{1}{q_\ell}), \quad \forall \ell = 1,2,\dots, k$$
with $h(0)=0$. For $\ell=0,1\dots,k-2$, we define $\Delta(\ell) = h(\ell+2) - h(\ell+1)$ and get:

$$\Delta(\ell+1) = f_\ell\Delta(\ell) + g_\ell, \;\; \Delta(0) = \frac{q_1}{p_1}(h(1) - \frac{1}{q_1})$$
where $f_\ell = \frac{q_{\ell+2}}{p_{\ell+2}}$ and $g_\ell = -\frac{1}{p_{\ell+2}}$. This is a first-order non-homogeneous recurrence relation with variable coefficients that yields the solution

$$\Delta(\ell) = (\prod_{i=0}^{\ell-1}f_i)(\Delta(0) + \sum_{i=0}^{\ell-1}\frac{g_i}{\prod_{j=0}^i f_j})$$

Finally, by substitution of $f_i$, $g_i$ and $\Delta(0)$ and by using the telescoping property we have $h(\ell) = h(1) + \sum_{i=0}^{\ell-2}\Delta(i)$ to solve for $h(\ell)$, we get that for any $\ell\in\{1,2,\dotsc,k\}$:
\begin{equation}\label{eq:h(l)_from_h(1)}
h(\ell) = h(1) + \sum_{i=1}^{\ell-1}\frac{q_1\cdots q_i}{p_1 \cdots p_i}(h(1) - \frac{1}{q_1} - \sum_{j=2}^{i}\frac{p_1 \cdots p_{j-1}}{q_1 \cdots q_{j}})
\end{equation}

It remains to determine the value of $h(1)$. Notice that if we set $p_0 = 1$ (i.e. state $0$ always transitions to state $1$ instead of being absorbing) the EET of the chain won't change, since we are only interested in the transitions until state $X=0$ is reached for the first time. However, if $p_0 = 1$ and $T_0$ denotes \textit{return time} of state $X=0$ (i.e., the time it takes to return to $X=0$ starting from $X=0$), it holds that 
\begin{equation}\label{eq:e(t0)}
E(T_0) = h(1) + 1
\end{equation}

For Markov Chains, it is known~\cite{prob:2003} that the Expected Return Time $E(T_\ell)$ of a state $\ell\in\pazocal{X}$ is given by $E(T_\ell)=\frac{1}{\pi_\ell}$ where $\pi$ is the stationary distribution of the Markov Chain. The stationary distribution $\pi$ of a Birth-Death Chain is known to satisfy
$$\pi_\ell p_\ell = \pi_{\ell+1}q_{\ell+1} \Rightarrow \pi_{\ell+1} = \frac{p_\ell}{q_{\ell+1}}\pi_\ell \Rightarrow \pi_{\ell} = \frac{p_{\ell-1} p_{\ell-2} \cdots p_0 }{q_{\ell}q_{\ell-1}\cdots q_1} \pi_0 \quad\;\;\forall \ell \in \{1,2,\dotsc, k\}$$

To see this, imagine "cutting" the chain between two adjacent states $\ell,(\ell+1)\in\pazocal{X}$. As the random process evolves $(n\rightarrow \infty$), the probability of being in a state $\ell\in\pazocal{X}$ is given by the stationary distribution $\pi_\ell$ of the chain and the number of times that the cut is crossed from the left to the right must become equal to the number of times that the cut is crossed from the right to the left. Finally, since $\sum_{i=0}^{k}\pi_i = 1$ and $p_0=1$, we get
\begin{equation}\label{eq:stationary_at_0}
\pi_0 = \frac{1}{1 + \sum_{i=1}^{k} \frac{p_1 \cdots p_{i-1}}{q_1\cdots q_i}}
\end{equation}

Combining equations~\eqref{eq:h(l)_from_h(1)},~\eqref{eq:e(t0)} and~\eqref{eq:stationary_at_0}, Theorem~\ref{th:eet_of_bdchain} follows.

\clearpage

\section{Proof of Lemma \ref{lemma:monotonicity}}\label{app:Chiplunkar}

In this part of the Appendix, we will formally give the proof for Lemma~\ref{lemma:monotonicity}. As we mentioned in Section~\ref{sec:gen_lb}, linear system~\eqref{system:lb} is equivalent to the linear system studied by Chiplunkar and Vishnawathan~\cite{ChV20} in order to show tight doubly-exponential bounds for the generalization of the problem where the metrics have unequal weights. However, as our analysis in Section~\ref{subsection:general_lb_equations} showed when the metrics have equal weights, this bound reduces to $\Theta(k!)$.

In order to prove Lemma~\ref{lemma:monotonicity}, we will first show that~\eqref{system:lb} is indeed equivalent to the equations studied in~\cite{ChV20}. Recall that linear system~\eqref{system:lb} is defined as:

\begin{equation*}
\left\{
\begin{array}{c}
     h(\emptyset) = 0\\
     p_m(h(S)- h(S\setminus \{m\})) = 1 +\sum_{j\notin S}p_j(h(S\cup \{j\})-h(S)), \quad \forall S\neq\emptyset , m=\min (S)\\
\end{array} 
\right\} 
\end{equation*}
under the assumption that $p_1\geq p_2 \geq \dots \geq p_k$. By introducing a new set of variables $\phi$, defined as
\begin{equation}\label{eq:phi_to_h}
    \phi(S) = h([k]) - h([k]\setminus S),\;\;\forall S\subseteq [k]
\end{equation}
we get that $\phi(\emptyset)=0$ and $\forall S\neq\emptyset$ with  $m=\min(S)$ we get 
$$p_m(\phi(([k]\setminus S)\cup\{m\})-\phi([k]\setminus S)) = 1 +\sum_{j\notin S}p_j(\phi([k]\setminus S)-\phi([k]\setminus S\setminus\{j\}))$$

Lastly, by re-writing the equations using $\bar{S}=[k]\setminus S$, we end up with the following (equivalent) linear system:

\begin{equation}\label{eq:their_system}
\left\{
\begin{array}{c}
     \phi(\emptyset) = 0\\
     p_m(\phi(\bar{S}\cup\{m\})-\phi(\bar{S}) = 1 +\sum_{j\in \bar{S}}p_j(\phi(\bar{S})-\phi(\bar{S}\setminus\{j\})), \quad \forall \bar{S}\neq [k] ,\; m=\min([k]\setminus \bar{S})\\
\end{array} 
\right\} 
\end{equation}

We remark that this is the exact set of equations studied by Chiplunkar and Vishnawathan (see equations (6),(7) in~\cite{ChV20}). Using the Gauss-Seidel Trick technique, they prove that the solutions of~\eqref{eq:their_system} (and thus the solutions of~\eqref{system:lb}) always exist. Then, in Lemma 3.3 of their paper they proved the following property for the solutions of linear system \eqref{eq:their_system}:

\begin{lemma}[\textbf{Lemma 3.3 of Chiplunkar et. al.}]
For any $\bar{S}\subseteq [k]$ with $i,j\notin \bar{S}$ and $i<j$ (thus $p_i\geq p_j$), the solution of linear system \eqref{eq:their_system} satisfies
$$p_i(\phi(\bar{S}\cup\{i\}) - \phi(\bar{S})) \leq p_j(\phi(\bar{S}\cup\{j\}) - \phi(\bar{S}))$$
\end{lemma}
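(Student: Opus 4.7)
The plan is to prove the inequality by strong induction on the size of the complement $|[k] \setminus \bar{S}|$, starting from the smallest nontrivial case $|[k]\setminus\bar{S}|=2$. Writing $\Delta_\ell(T) := \phi(T \cup \{\ell\}) - \phi(T)$ for $\ell \notin T$, the goal is to establish $p_i \Delta_i(\bar{S}) \leq p_j \Delta_j(\bar{S})$ whenever $i<j$ and $i,j \notin \bar{S}$. A useful structural identity I would use throughout is a \emph{path-independence relation}: since $\phi(\bar{S} \cup \{i,j\}) - \phi(\bar{S})$ can be expanded either through $\bar{S}\cup\{i\}$ or through $\bar{S}\cup\{j\}$, we obtain
\[
\Delta_i(\bar{S}) + \Delta_j(\bar{S} \cup \{i\}) \;=\; \Delta_j(\bar{S}) + \Delta_i(\bar{S} \cup \{j\}),
\]
which lets us shift information between $\bar{S}$ and its supersets.

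For the base case, $[k]\setminus\bar{S} = \{i,j\}$ is forced and the minima of the complements of $\bar{S}$, $\bar{S}\cup\{i\}$, and $\bar{S}\cup\{j\}$ are respectively $i$, $j$, and $i$. Writing out the three corresponding equations of system~\eqref{eq:their_system} together with the path-independence identity gives a small linear relation among $p_i\Delta_i(\bar{S})$, $p_j\Delta_j(\bar{S})$, $p_j\Delta_j(\bar{S}\cup\{i\})$, and $p_i\Delta_i(\bar{S}\cup\{j\})$, from which the desired inequality should fall out after using $p_i \geq p_j$. For the inductive step, I would assume the lemma for every $\bar{S}'$ with strictly smaller $|[k]\setminus\bar{S}'|$, apply it to $\bar{S}\cup\{i\}$ and $\bar{S}\cup\{j\}$ to control $\Delta_j(\bar{S}\cup\{i\})$ and $\Delta_i(\bar{S}\cup\{j\})$, and combine those estimates with the path-independence identity and the defining equation of $\bar{S}$ (evaluated at $m=\min([k]\setminus\bar{S})$) to transport the inequality down to $\bar{S}$.

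The main obstacle is the case in which $i$ is not the minimum of $[k]\setminus\bar{S}$: then the system's equation at $\bar{S}$ controls only $p_m\Delta_m(\bar{S})$ for some $m<i$, rather than $p_i\Delta_i(\bar{S})$ or $p_j\Delta_j(\bar{S})$ directly. To route past this, one expands $\Delta_i(\bar{S}) = \Delta_i(\bar{S}\cup\{m\}) + \Delta_m(\bar{S}) - \Delta_m(\bar{S}\cup\{i\})$ via path-independence and handles $\Delta_i(\bar{S}\cup\{m\})$ using the inductive hypothesis on $\bar{S}\cup\{m\}$ (whose complement is smaller), and similarly for $\Delta_j(\bar{S})$. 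Carefully tracking which direction of inequality is preserved after multiplying by $p_i$ versus $p_j$, and invoking $p_i\geq p_j$ at each step, is the delicate part of the argument; once the resulting chain is assembled, telescoping and rearranging should yield exactly $p_i\Delta_i(\bar{S}) \leq p_j\Delta_j(\bar{S})$.
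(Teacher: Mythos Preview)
The paper does not prove this statement. Appendix~\ref{app:Chiplunkar} only shows that system~\eqref{system:lb} is a relabelling of system~\eqref{eq:their_system} and then quotes Lemma~3.3 of Chiplunkar--Vishwanathan as a black box, so there is no ``paper's own proof'' to compare your argument against.

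As for your sketch itself, there is a structural gap. Your induction is on $|[k]\setminus\bar S|$, anchored at $|[k]\setminus\bar S|=2$, and you plan to combine the defining equations of $\bar S$, $\bar S\cup\{i\}$, $\bar S\cup\{j\}$ with the path-independence identity. But every one of those equations has the right-hand side $1+\sum_{\ell\in T}p_\ell\bigl(\phi(T)-\phi(T\setminus\{\ell\})\bigr)$, and the terms $\phi(T\setminus\{\ell\})$ live at \emph{subsets} of $\bar S$, i.e.\ at sets with \emph{larger} complement. Those are precisely the sets you have not yet reached in your induction, and they do not cancel when you subtract two of the equations: for instance $R(\bar S\cup\{j\})-R(\bar S)$ produces terms $\Delta_j(\bar S\setminus\{\ell\})$ for every $\ell\in\bar S$. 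Your write-up never mentions these right-hand-side sums, and without controlling them the four relations you list are not enough to close either the base case or the inductive step.

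The proof in~\cite{ChV20} avoids this difficulty by arguing through the Gauss--Seidel iteration used to establish existence: one shows that the desired monotonicity holds at every iterate (where all $\phi$-values from the previous round are available simultaneously, regardless of set size) and then passes to the limit. If you want a direct structural induction instead, you would need either to induct in the opposite direction on $|\bar S|$, or to strengthen the hypothesis so that the subset terms on the right-hand side are already controlled when you reach $\bar S$.
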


By re-writing this lemma for $S=[k]\setminus \bar{S}$, we get that for any $S\subseteq [k]$ with $i,j\in S$ and $i<j$, the solution of linear system \eqref{eq:their_system} satisfies
$$p_i(\phi([k]\setminus (S\setminus\{i\})) - \phi([k]\setminus S)) \leq p_j(\phi([k]\setminus (S\setminus\{j\})) - \phi([k]\setminus S))$$

Lastly, by equation \eqref{eq:phi_to_h} we get that for any $S\subseteq [k]$ with $i,j\in S$ and $i<j$, the solution of linear system \eqref{system:lb} satisfies
$$p_i(h(S)-h(S\setminus\{i\})) \leq p_j(h(S)-h(S\setminus\{i\}))$$
that is, we have that Lemma \ref{lemma:monotonicity} holds.

% Our results in Section~\ref{sec:gen_lb} where based on the assumption that Lemma \ref{lemma:monotonicity} holds, and thus we need to prove it. The set of equations that we are interested in has been extensively studied in a different paper (Chiplunkar et.al.) where the authors proved bounds for the weighted k-server problem and the generalized k-server problem on uniform metrics \textit{with unequal weights}. However, since they were studying weighted problems, the monotonicity property they proved wasn't enough to prove the bound of $k\alpha_k = \Theta(k!)$. Instead, they proved a double-exponential bound which is also tight for their problems. Because their analysis is very technical and complex, we will show that their (already proven) results are enough to show the monotonicity property that we need.

% Recall that we are studying the linear system of equations \eqref{system:lb}:
% \begin{equation*}
% \left\{
% \begin{array}{c}
%      h(\emptyset) = 0\\
%      p_i(h(S)- h(S\setminus \{i\})) = 1 +\sum_{j\notin S}p_j(h(S\cup \{j\})-h(S)), \quad \forall S\neq\emptyset , i=min(S)\\
% \end{array} 
% \right\} 
% \end{equation*}

\clearpage
\section{Bounds in 2-point metric spaces}\label{app:harmonic_n=2}
In this part of the Appendix, we analyze the Markov Chain that captures the evolution of the Hamming distance between the Harmonic Algorithm and the adversary in metric spaces with $n=2$ points. We study the following Markov Chain:

\begin{figure}[ht]
    \centering
    \includegraphics[width=\textwidth]{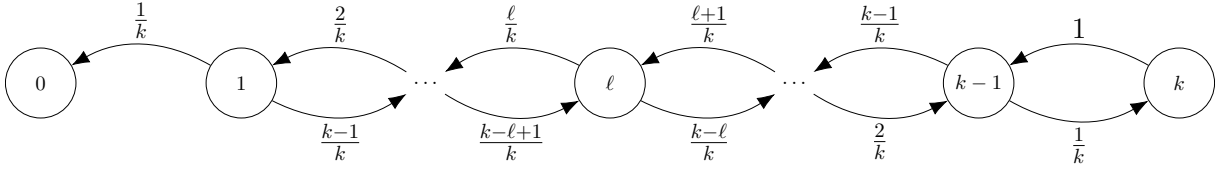}
    \caption{The evolution of the Harmonic Algorithm's Hamming distance from the adversary when $n=2$.}
    \label{figure:binary}
\end{figure}

Notice that this is a Birth-Death Chain (defined in Appendix~\ref{app:markov_chains}) with probabilities $q_\ell=\frac{\ell}{k}$ and $p_\ell = \frac{k-\ell}{k}$. Using Theorem~\ref{th:eet_of_bdchain} and these probabilities, we get that for any $\ell\in\{1,2,\dotsc,k\}$
\begin{equation}\label{eq:binary}
h(\ell) =  2^k-1 + \sum_{i=1}^{\ell-1}\frac{1}{\binom{k-1}{i}}(  2^k -  \sum_{j=0}^i\binom{k}{j})
\end{equation}

\begin{theorem}\label{thm:binary_EET}
In the Markov Chain of Figure~\ref{figure:binary}, for any $\ell\in\{1,2,\dotsc ,k\}$, it holds that the EET is $h(\ell)=\Theta(2^k)$.
\end{theorem}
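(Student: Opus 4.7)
The plan is to establish $h(\ell) = \Omega(2^k)$ and $h(\ell) = O(2^k)$ separately, both directly from the closed form \eqref{eq:binary}. The lower bound is essentially for free: every summand in the outer sum of \eqref{eq:binary} is non-negative (since $\sum_{j=0}^i \binom{k}{j} \leq 2^k$), so $h(\ell) \geq 2^k - 1$ for every $\ell \in \{1,\dotsc,k\}$, which already gives $h(\ell) = \Omega(2^k)$. Alternatively one can invoke monotonicity of hitting times ($h$ is non-decreasing in $\ell$ for birth–death chains) together with the direct computation $h(1) = 2^k - 1$ implicit in the formula.

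For the upper bound, I would first simplify the tail binomial sum using Pascal's identity $\binom{k}{j} = \binom{k-1}{j-1} + \binom{k-1}{j}$. A short telescoping computation yields
$$2^k - \sum_{j=0}^i \binom{k}{j} \;=\; \binom{k-1}{i} \;+\; 2\sum_{j=i+1}^{k-1}\binom{k-1}{j}.$$
Substituting this identity back into \eqref{eq:binary} decomposes $h(\ell)$ as
$$h(\ell) \;=\; (2^k - 1) \;+\; (\ell - 1) \;+\; 2\sum_{i=1}^{\ell-1}\frac{\sum_{j=i+1}^{k-1}\binom{k-1}{j}}{\binom{k-1}{i}}.$$
The first two summands are already $O(2^k)$, so it suffices to show that the double sum, call it $T(\ell)$, is itself $O(2^k)$.

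The key manipulation is to swap the order of summation, writing
$$T(\ell) \;=\; \sum_{j=2}^{k-1}\binom{k-1}{j}\,\sum_{i=1}^{\min(j-1,\ell-1)}\frac{1}{\binom{k-1}{i}}.$$
For the inner sum I would use the crude but sufficient bound $\binom{k-1}{i} \geq k-1$ for every $1 \leq i \leq k-2$, which follows from unimodality of the row of Pascal's triangle (the minimum on the interior is attained at the two endpoints $i = 1$ and $i = k-2$). Hence $\sum_{i=1}^{\min(j-1,\ell-1)}1/\binom{k-1}{i} \leq (k-2)/(k-1) < 1$, and therefore $T(\ell) \leq \sum_{j=2}^{k-1}\binom{k-1}{j} \leq 2^{k-1}$. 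Combining everything gives $h(\ell) \leq (2^k - 1) + (\ell - 1) + 2 \cdot 2^{k-1} \leq 2^{k+1} + k$, which is $O(2^k)$ as required.

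The only point requiring mild care is verifying the Pascal telescoping (a rewriting of the form $\sum_{j=i+1}^k \binom{k-1}{j-1} + \sum_{j=i+1}^k \binom{k-1}{j}$ and reindexing) and checking that the uniform bound $\binom{k-1}{i} \geq k-1$ holds on the full range of interest. Both are elementary, so I do not anticipate a substantive obstacle; the whole argument is essentially a one-page calculation once \eqref{eq:binary} is in hand.
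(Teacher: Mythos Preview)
Your argument is correct. The lower bound is identical to the paper's: each summand in \eqref{eq:binary} is non-negative, hence $h(\ell)\ge 2^k-1$.

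For the upper bound you take a slightly more elaborate route than the paper. The paper simply drops the subtracted partial sum, bounding the bracket by $2^k$ outright, and then shows $\sum_{i=1}^{k-1}1/\binom{k-1}{i}=O(1)$ by separating the extremal terms $i\in\{1,k-2,k-1\}$ and using $\binom{k-1}{i}\ge\binom{k-1}{2}$ on the interior. Your Pascal telescoping and order-swap are unnecessary for the asymptotic result, but they are correct and in fact sharpen the implicit constant (you get roughly $2^{k+1}$ whereas the paper's cruder estimate gives about $4\cdot 2^k$). Both arguments ultimately rest on the same elementary fact that interior binomial coefficients $\binom{k-1}{i}$ are at least $k-1$, so the difference is one of packaging rather than substance.
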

\begin{proof}
Observe that for any $\ell\in\{1,\dotsc,k\}$ and $i\leq \ell-1$, we have $\sum_{j=0}^i\binom{k}{j} \leq 2^k$ and thus $h(\ell)\geq 2^k-1$. Also,

\begin{equation*} 
\begin{split}
\sum_{i=1}^{\ell-1}\frac{1}{\binom{k-1}{i}}(2^k - \sum_{j=0}^i\binom{k}{j}) & \leq 2^k\sum_{i=1}^{\ell-1}\frac{1}{\binom{k-1}{i}} \\
& \leq 2^k\sum_{i=1}^{k-1}\frac{1}{\binom{k-1}{i}} \\
& = 2^k(1 + \frac{2}{k-1} + \sum_{i=2}^{k-3}\frac{1}{\binom{k-1}{i}}) \\
& \leq 2^k(3 + \frac{k-4}{\binom{k-1}{2}}) \\
& = 2^k(3 + \frac{2(k-4)}{(k-2)(k-1)}) = O(2^k)
\end{split}
\end{equation*}
where the last inequality follows from the fact that $\binom{\ell}{i} \geq \binom{\ell}{2}$ for any $i\in [2,\ell-2]$. Combining with~\eqref{eq:binary}, we get that $h(\ell)=\Theta(2^k)$ for any $\ell \in\{1,\dotsc,k\}$.
\end{proof}

Using similar techniques to those of Section~\ref{section:harmonic_upper_bound}, Theorem~\ref{thm:binary_EET} implies an $O(2^k)$ upper bound for the competitive ratio of the Harmonic Algorithm.

\end{document}